\theoremstyle{plain}
\numberwithin{equation}{section}
\newtheorem{thm}{Theorem}[section]
\newtheorem{lem}[thm]{Lemma}
\newtheorem{cor}[thm]{Corollary}
\newcommand{\integers}{{\mathbb Z}}
\newcommand{\real}{{\mathbb R}}
\newcommand{\positive}{{\mathbb N}}
\newcommand{\complex}{{\mathbb C}}
\newcommand{\cscript}{{\mathcal C}}
\newcommand{\gscript}{{\mathcal G}}
\newcommand{\hscript}{{\mathcal H}}
\newcommand{\rscript}{{\mathcal R}}
\newcommand{\sscript}{{\mathcal S}}
\newcommand{\uscript}{{\mathcal U}}
\newcommand{\rmre}{\mathrm{Re}}
\newcommand{\rmim}{\mathop{Im}}
\newcommand{\ehat}{\widehat{e}}
\newcommand{\fhat}{\widehat{f}}
\newcommand{\ghat}{\widehat{g}}
\newcommand{\hhat}{\widehat{h}}
\newcommand{\phat}{\widehat{p}}
\newcommand{\sscripthat}{\widehat{\sscript}}
\newcommand{\capahat}{\widehat{A}}
\newcommand{\zerohat}{\widehat{0}}
\newcommand{\overa}{\overline{a}}
\newcommand{\overb}{\overline{b}}
\newcommand{\overs}{\overline{\sscript}}
\newcommand{\undere}{\underline{e}}
\newcommand{\underp}{\underline{p}}
\newcommand{\underu}{\underline{u}}
\newcommand{\underv}{\underline{v}}
\newcommand{\underw}{\underline{w}}
\newcommand{\underzero}{\underline{0}}
\newcommand{\ctimes}{\mathrel{\mathlarger\cdot}}
\newcommand*\circleds{\tikz[baseline=(char.base)]{
  \node[shape=circle,draw,inner sep=1pt] (char) {s};}}   
\newcommand{\ab}[1]{\left|#1\right|}
\newcommand{\doubleab}[1]{\left|\left|#1\right|\right|}
\newcommand{\brac}[1]{\left\{#1\right\}}
\newcommand{\paren}[1]{\left(#1\right)}
\newcommand{\sqbrac}[1]{\left[#1\right]}
\newcommand{\elbows}[1]{{\left\langle#1\right\rangle}}
\newcommand{\ket}[1]{{\left|#1\right>}}
\newcommand{\bra}[1]{{\left<#1\right|}}
\newcolumntype{L}{>{$}l<{$}} 
\begin{document}
\title{DISCRETE QUANTUM GRAVITY\\and\\QUANTUM FIELD THEORY
}
\author{S. Gudder\\ Department of Mathematics\\
University of Denver\\ Denver, Colorado 80208, U.S.A.\\
sgudder@du.edu
}
\date{}
\maketitle

\begin{abstract}
We introduce a discrete 4-dimensional module over the integers that appears to have maximal symmetry. By adjoining the usual Minkowski distance, we obtain a discrete 4-dimensional Minkowski space. Forming universe histories in this space and employing the standard causal order, the histories become causal sets. These causal sets increase in size rapidly and describe an inflationary period for the early universe. We next consider the symmetry group $G$ for the module. We show that $G$ has order 24 and we construct its group table. In a sense $G$ is a discrete approximation to the Lorentz group. However, we note that it contains no boosts and is essentially a rotation group. Unitary representations of $G$ are constructed. The energy-momentum space dual to the discrete module is obtained and a quantum formalism is derived. A discrete Fock space is introduced on this structure and free quantum fields are considered. Finally, we take the first step in a study of interacting quantum fields.
\end{abstract}

\section{Introduction}  
It is well-known that general relativity and quantum field theory are both plagued by singularities and infinities. This is particularly serious for quantum field theory because small distances must be considered there. These difficulties are usually circumvented by methods of infinity cancellations and re-normalizations but the methods are mathematically suspect. In general relativity, it is usually just admitted that the theory breaks down and is no longer applicable at small distance scales. The simplest and possibly only solution to these problems has been known for almost a hundred years \cite{gud68,hei30,rus54}. At that time, it was suggested by several researchers that the physical universe is discrete. They speculated that there exists an elementary length $\ell$ and an elementary time $t$. The likely values of these are the Planck length
$\ell\approx 10^{-33}$cm and the Planck time $t\approx 10^{-43}$sec. In this view, spacetime is composed of discrete, tiny cells of Planck size. This idea was not completely outrageous because it was already known that energies were composed of packets which came in multiples of Planck's constant $h$ and electric charge came in multiples of the electron charge $e$. (The latter was later altered when quarks were discovered with charges $\pm 1/3$ and $\pm 2/3$, but the idea is still the same.) One problem is that, unlike energy and charge, this granular structure of spacetime has not been experimentally observed. However, some investigators believe that with ever more sensitive instruments, this discrete framework will eventually be unveiled.

The main problem with discrete spacetime is that a substantial amount of symmetry would be lost. In particular, we would lose Lorentz invariance. Lorentz invariance is not only a pillar of theoretical physics, it has been experimentally verified a myriad of times. But it is possible that Lorentz invariance is only an approximation. There may be a smaller, more fundamental symmetry group that is indistinguishable from the Lorentz group except at very small scales.

This article attempts to construct a reasonable discrete spacetime upon which a discrete quantum gravity and quantum field theory can be built. How should one proceed with such a construction? Let's start with 2-dimensional space and add time later. We could naively begin with a square lattice structure. Basic cells that we call \textit{vertices} would be placed at $\zerohat =(0,0)$, $\ehat _1=(1,0)$, $\ehat _2=(0,1)$ and other vertices would have locations $a\ehat _1+b\ehat _2$, $a,b\in\integers$. In this case, we set Planck's length $\ell =1$. Now $\zerohat$ is indeed a distance 1 from its four nearest neighbors $\pm\ehat _1,\pm\ehat _2$. But the distance from $\zerohat$ to neighbor $\ehat _1+\ehat _2$ is
$\sqrt{2}$ and this is not a measurable distance which would have to be a positive integer. Also, $\ehat _1$ and $\ehat _2$ are distance
$\sqrt{2}$ apart. The symmetry group for this square lattice is the group of order 4 consisting of the identity and the three rotations by radian angles $\pi /2$, $\pi$ and $3\pi /2$.

We obtain more symmetry by considering a triangular lattice. In this case we have the basic vertices at $\zerohat =(0,0)$, $\ehat =(1,0)$, $\fhat =(1/2,\sqrt{3}/2)$ and other vertices have locations at $a\ehat +b\fhat$, $a,b\in\integers$. Now $\zerohat$ is a distance 1 from its six nearest neighbors $\pm\ehat ,\pm\fhat, \ehat -\fhat ,\fhat -\ehat$. We also have the bonus that $\ehat$ and $\fhat$ as well as $\ehat$ and $\ehat -\fhat$ and others are distance 1 apart. Unfortunately, the distances from $\ehat$ to $\fhat -\ehat$ is $\sqrt{3}$ which is not elementary length measurable. The triangular lattice has more symmetry than the square lattice because its symmetry group has order 6 and consists of the rotations by angles $n\pi /3$, $n=0,1,\ldots ,5$.

We noted that the distances between many lattice vertices in both cases are not elementary length measurable because they are not integers. In the first case, the distance has the form
\begin{equation*}
\doubleab{a\ehat _1+b\ehat _2}_2=\sqrt{a^2+b^2}
\end{equation*}
while in the second case
\begin{equation*}
||a\ehat +b\fhat ||_2=\sqrt{a^2+b^2+ab}
\end{equation*}
However, this is not really a problem because as usual in relativity theory, we measure distance with the Minkowski metric which is in terms of the Euclidean distance squared so instead we have
\begin{align*}
\doubleab{a\ehat _1+b\ehat _2}_2^2&=a^2+b^2
\intertext{and}
||a\ehat +b\fhat ||_2^2&=a^2+b^2+ab
\end{align*}
which, of course, are integers. Notice that these two cases are the only ones with this property. Indeed, suppose that $\ghat$ and $\hhat$ are unit vectors in $\real ^2$ with inner product $\elbows{\ghat ,\hhat}=\cos\theta$, where $0<\theta\le\pi /2$. Then for $a,b\in\integers$ we have
\begin{align*}
||a\ghat +b\hhat ||_2^2&=a^2\doubleab{\ghat}_2^2+b^2||\hhat ||_2^2+2ab\elbows{\ghat ,\hhat}\\
&=a^2+b^2+2ab\cos\theta
\end{align*}

Notice that the only values of $\theta$ for which this is always an integer are $\theta =\pi /3$ or $\pi /2$.

A similar analysis holds for 3-dimensional space. In this case we have a cubic or tetrahedral lattice. As in the 2-dimensional case, we shall see that the latter has certain advantages. Moreover, these are the only two that have measurable Minkowski distances.

In Section~2 we discuss discrete 3-dimensional space. We first construct the tetrahedral lattice $\sscript _3$ and its corresponding symmetry group $\gscript (\sscript _3)$. We show that $\gscript (\sscript _3)$ has order 24 and we exhibit its group table, In Section~3 we derive 2- and
3-dimensional unitary representations for $\gscript (\sscript _3)$.

Section~4 introduces a discrete 4-dimensional module $\sscript _4\subseteq\real ^4$ that has maximal symmetry in the sense we previously discussed. Letting $\doubleab{u}_4$ be the usual Minkowski metric, the pair $(\sscript _4,\doubleab{\ctimes}_4)$ becomes a discrete
4-dimensional Minkowski space. In a sense, its symmetry group $\gscript (\sscript _4)$ is a discrete approximation to the Lorentz group. However, we note that $\gscript (\sscript _4)$ contains no boosts and is essentially a discrete rotation group. We next form universe histories
$\sscript ^{t-}$, $t=1,2,\ldots$, where $t$ represents a discrete time. Employing the standard causal order $u<v$ if $u^0<v^0$ and
$\doubleab{u-v}_4^2\ge 0$, $\sscript ^{t-}$ becomes a causal set (causet) \cite{gud13,gud151,gud152}.

The causets $\sscript ^{t-}$ increase rapidly in size and describe an inflationary period for the early universe. During this period, the universe is essentially flat and gravity does not present itself. Moreover, the causets have a special property that we call weak covariance. This property states that all paths between two fixed vertices have the same length. At the end of the inflationary period, the system experiences a phase transition and enters the multiverse period. During the multiverse period, the universe splits into parts, each with their own geometry. The various geometries then determine curvatures and gravity in the constituent universes. Moreover, the causets possess a stronger property called covariance \cite{gud13,gud151,gud152}.

The structure presented in Section~4 entails a novel phenomenon that is more fundamental than the constancy of the speed of light $c$ in a vacuum. The reason that $c$ is the upper speed limit is that $c$ is the \textit{only} speed that a particle can attain other than zero. The reason that we observe slower speeds is that we are measuring average speeds and $c$ is the instantaneous speed of a particle.

In Section~5 the corresponding energy-momentum space $\sscripthat _4$ is obtained and a quantum formalism on $\sscripthat _4$ is derived. A discrete Fock space is introduced on $\sscripthat _4$ and free quantum fields are studied. A first step toward interacting quantum fields is presented in Section~6 and it is mentioned that this framework may result in a mathematically rigorous quantum field theory.

\section{Discrete Space} 
For contrast and comparison, we begin with a discrete space formed from a cubic lattice. To discuss this lattice, let $\ehat _1=(1,0,0)$,
$\ehat _2=(0,1,0)$, $\ehat _3=(0,0,1)$ be the usual basic vertices and let $\sscript _c^3\subseteq\real ^3$ be the set of vertices
\begin{equation*}
\sscript _c^3=\brac{u\in\real ^3\colon u=n\ehat _1+p\ehat _2+q\ehat _3,\quad n,p,q\in\integers}
\end{equation*}
Notice that $\sscript _c^3$ is a module over the integers $\integers$. That is, $u,v\in\sscript _c^3$ implies $u+v\in\sscript _c^3$ and
$nu\in\sscript _c^3$ for all $n\in\integers$. Now any $u\in\sscript _c^3$ is distance 1 to its six nearest neighbors $u\pm\ehat _1$, $u\pm\ehat _2$, $u\pm\ehat _3$. We think of the vertices in $\sscript _c^3$ as cells of Planck size that may or may not be occupied by a particle and the six edges from $u$ to its nearest neighbors as directions along which particles can move.

A \textit{symmetry} on $\sscript_c^3$ is a linear isometry on $\sscript _c^3$ with determinant 1. We denote the group of symmetries on
$\sscript _c^3$ by $\gscript (\sscript _c^3)$. The matrices of elements in $\gscript (\sscript _c^3)$ are determined by how they act on the basis
$\ehat _1$, $\ehat _2$, $\ehat _3$. For example, the symmetry 
\begin{equation*}
\capahat\colon (\ehat _1,\ehat _2,\ehat _3)\to (\ehat _2,-\ehat _1,\ehat _3)
\end{equation*}
is represented by the matrix
\begin{equation*}
\capahat =\begin{bmatrix}\noalign{\smallskip}
0&-1&0\\1&0&0\\0&0&1\\\noalign{\smallskip}\end{bmatrix}
\end{equation*}
It turns out that $\gscript (\sscript _c^3)$ is a group of order 24. These matrices are unitary with determinant 1. Besides $\capahat$ given above, the elements of $\gscript (\sscript _c^3)$ have matrices given as follows:
\begin{align*}
&\widehat{B}=\begin{bmatrix}\noalign{\smallskip}
-1&0&0\\0&-1&0\\0&0&1\\\noalign{\smallskip}\end{bmatrix},\ 
\widehat{C}=\begin{bmatrix}\noalign{\smallskip}
0&1&0\\-1&0&0\\0&0&1\\\noalign{\smallskip}\end{bmatrix},\ 
\widehat{D}=\begin{bmatrix}\noalign{\smallskip}
1&0&0\\0&0&-1\\0&1&0\\\noalign{\smallskip}\end{bmatrix},\ 
\widehat{E}=\begin{bmatrix}\noalign{\smallskip}
1&0&0\\0&-1&0\\0&0&-1\\\noalign{\smallskip}\end{bmatrix}\\\noalign{\medskip}
&\widehat{F}=\begin{bmatrix}\noalign{\smallskip}
1&0&0\\0&0&1\\0&-1&0\\\noalign{\smallskip}\end{bmatrix},\ 
\widehat{G}=\begin{bmatrix}\noalign{\smallskip}
0&0&1\\0&1&0\\-1&0&0\\\noalign{\smallskip}\end{bmatrix},\ 
\widehat{H}=\begin{bmatrix}\noalign{\smallskip}
-1&0&0\\0&1&0\\0&0&-1\\\noalign{\smallskip}\end{bmatrix},\ 
\widehat{I}=\begin{bmatrix}\noalign{\smallskip}
1&0&0\\0&1&0\\0&0&1\\\noalign{\smallskip}\end{bmatrix}\\\noalign{\medskip}
&\widehat{J}=\begin{bmatrix}\noalign{\smallskip}
0&0&-1\\0&1&0\\1&0&0\\\noalign{\smallskip}\end{bmatrix},\ 
\widehat{K}=\begin{bmatrix}\noalign{\smallskip}
0&0&1\\1&0&0\\0&1&0\\\noalign{\smallskip}\end{bmatrix},\ 
\widehat{L}=\begin{bmatrix}\noalign{\smallskip}
0&1&0\\0&0&1\\1&0&0\\\noalign{\smallskip}\end{bmatrix},\ 
\widehat{M}=\begin{bmatrix}\noalign{\smallskip}
0&-1&0\\0&0&-1\\1&0&0\\\noalign{\smallskip}\end{bmatrix}\\\noalign{\medskip}
&\widehat{N}=\begin{bmatrix}\noalign{\smallskip}
0&0&1\\-1&0&0\\0&-1&0\\\noalign{\smallskip}\end{bmatrix},\ 
\widehat{O}=\begin{bmatrix}\noalign{\smallskip}
0&-1&0\\0&0&1\\-1&0&0\\\noalign{\smallskip}\end{bmatrix},\ 
\widehat{P}=\begin{bmatrix}\noalign{\smallskip}
0&0&-1\\-1&0&0\\0&1&0\\\noalign{\smallskip}\end{bmatrix},\ 
\widehat{Q}=\begin{bmatrix}\noalign{\smallskip}
0&1&0\\0&0&1\\1&0&0\\\noalign{\smallskip}\end{bmatrix}\\\noalign{\medskip}
&\widehat{R}=\begin{bmatrix}\noalign{\smallskip}
0&1&0\\0&0&-1\\-1&0&0\\\noalign{\smallskip}\end{bmatrix},\ 
\widehat{S}=\begin{bmatrix}\noalign{\smallskip}
0&1&0\\1&0&0\\0&0&-1\\\noalign{\smallskip}\end{bmatrix},\ 
\widehat{T}=\begin{bmatrix}\noalign{\smallskip}
0&0&1\\0&-1&0\\1&0&0\\\noalign{\smallskip}\end{bmatrix},\ 
\widehat{U}=\begin{bmatrix}\noalign{\smallskip}
0&-1&0\\-1&0&0\\0&0&-1\\\noalign{\smallskip}\end{bmatrix}\\\noalign{\medskip}
&\widehat{V}=\begin{bmatrix}\noalign{\smallskip}
0&0&-1\\0&-1&0\\-1&0&0\\\noalign{\smallskip}\end{bmatrix},\ 
\widehat{W}=\begin{bmatrix}\noalign{\smallskip}
0&0&-1\\1&0&0\\0&-1&0\\\noalign{\smallskip}\end{bmatrix},\ 
\widehat{X}=\begin{bmatrix}\noalign{\smallskip}
-1&0&0\\0&0&1\\0&1&0\\\noalign{\smallskip}\end{bmatrix}\\
\end{align*}

We shall not write down the group table for $\gscript (\sscript _c^3)$ because we want to proceed to the tetrahedral space lattice which we believe has certain advantages. Let $e,f,g$ be unit vectors in $\real ^3$ that are distance 1 from each other. That is,
\begin{equation*}
\doubleab{e}=\doubleab{f}=\doubleab{g}=\doubleab{e-f}=\doubleab{e-g}=\doubleab{f-g}=1
\end{equation*}
An example of such vectors is
\begin{equation}         
\label{eq21}
e=(1,0,0),\ f=(1/2,\sqrt{3}/2,0),\ g=(1/2,1/2\sqrt{3},\sqrt{2/3})
\end{equation}
but the particular form is not needed now. The four vectors $(0,e,f,g)$ are distance 1 from each other and form the vertices of a tetrahedron with edge length 1. Notice that
\begin{equation*}
\elbows{e,f}=\elbows{e,g}=\elbows{f,g}=1/2
\end{equation*}
so the angle between any two of $e,f,g$ is $\pi /3$. As mentioned in Section~1, the cubic and tetrahedral space lattices are the only regular lattices for which the distance squared between any two vertices is an integer. The vertices of this lattice is given by the set
\begin{equation*}
\sscript _3=\brac{u\in\real ^3\colon u=ne+pf+qg,\quad n,p,q\in\integers}
\end{equation*}
As before $\sscript _3$ is a module over $\integers$. One advantage of $\sscript _3$ over $\sscript _c^3$ is that each $u\in\sscript _3$ has 12 nearest neighbors a distance 1 away so we now have 12 directions along which a particle can propagate.

\begin{lem}       
\label{lem21}
{\rm (i)}\enspace The vectors $e,f,g$ are linearly independent and form a basis for $\sscript _3$.
{\rm (i)}\enspace There are 12 unit vectors in $\sscript _3$ and these are given by $e,f,g,e-f,e-g,f-g$ and their negatives.
\end{lem}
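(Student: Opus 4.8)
The plan is to handle both parts by direct computation with the Gram matrix of $e,f,g$, which by the stated inner products has all diagonal entries $1$ and all off-diagonal entries $1/2$. For part (i), I would first establish linear independence over $\real$. The cleanest route is to compute the Gram determinant
\begin{equation*}
\det\begin{bmatrix} 1 & 1/2 & 1/2 \\ 1/2 & 1 & 1/2 \\ 1/2 & 1/2 & 1 \end{bmatrix} = \tfrac{1}{2} \neq 0,
\end{equation*}
which forces independence, since a linear dependence among $e,f,g$ would make the Gram matrix singular. (Alternatively, with the explicit coordinates \eqref{eq21} the matrix whose rows are $e,f,g$ is lower triangular with nonzero diagonal, hence invertible.) Since $\sscript_3$ is by definition the set of integer combinations of $e,f,g$, real independence yields independence over $\integers$, so $e,f,g$ form a free $\integers$-basis of $\sscript_3$. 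This part is routine.

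For part (ii) I must identify every $u=ne+pf+qg$ with $n,p,q\in\integers$ and $\doubleab{u}=1$. Expanding the norm using $\elbows{e,f}=\elbows{e,g}=\elbows{f,g}=1/2$ gives
\begin{equation*}
\doubleab{ne+pf+qg}^2 = n^2+p^2+q^2+np+nq+pq.
\end{equation*}
The key algebraic step, and the one genuine idea in the proof, is the identity
\begin{equation*}
2\paren{n^2+p^2+q^2+np+nq+pq} = (n+p)^2 + (n+q)^2 + (p+q)^2,
\end{equation*}
which converts the unit-vector condition $\doubleab{u}^2=1$ into $(n+p)^2+(n+q)^2+(p+q)^2 = 2$.

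With this reduction the problem becomes elementary: the only way to write $2$ as a sum of three integer squares is $1+1+0$ in some order, so exactly one of $n+p,\,n+q,\,p+q$ vanishes while the other two equal $\pm 1$. Setting $a=n+p$, $b=n+q$, $c=p+q$, I would invert this linear change of variables via $n=(a+b-c)/2$, $p=(a-b+c)/2$, $q=(-a+b+c)/2$; for every solution of $a^2+b^2+c^2=2$ the sum $a+b+c\in\brac{-2,0,2}$ is even, so the resulting $n,p,q$ are automatically integers. Counting gives $3\times 4 = 12$ admissible triples $(a,b,c)$, hence exactly $12$ vectors $u$, and a short table matching each triple back to $(n,p,q)$ confirms that these are precisely $e,f,g,e-f,e-g,f-g$ and their negatives.

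The only real obstacle is spotting the sum-of-squares identity; once it is in hand, the rest is a bounded enumeration plus bookkeeping. A more pedestrian alternative for part (ii), namely completing the square in the positive-definite form to bound $\ab{n},\ab{p},\ab{q}$ and then checking the finitely many remaining cases, also works but is messier, so I would present the identity-based argument.
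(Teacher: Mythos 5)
Your proposal is correct and follows essentially the same route as the paper: part (i) via nonsingularity of the Gram matrix (the paper's device of pairing $ne+pf+qg=0$ against $e,f,g$ produces exactly that Gram matrix as the coefficient matrix of its linear system), and part (ii) via the identical key identity $2\paren{n^2+p^2+q^2+np+nq+pq}=(n+p)^2+(n+q)^2+(p+q)^2$ reducing to $a^2+b^2+c^2=2$. Your explicit change of variables $a=n+p$, $b=n+q$, $c=p+q$ with the parity check merely formalizes the paper's direct case enumeration, and both arguments are sound.
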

\begin{proof}
(i)\enspace If $ne+pf+qg=0$, taking inner products with $e$, $f$ and $g$ give
\begin{equation*}
n+\tfrac{1}{2}p+\tfrac{1}{2}q=\tfrac{1}{2}n+p+\tfrac{1}{2}q=\tfrac{1}{2}n+\tfrac{1}{2}p+q=0
\end{equation*}
Solving these equations simultaneously we have $n=p=q=0$ so $e,f,g$ are linear independent. Since $\dim\real ^3=3$, $\brac{e,f,g}$ forms a basis for $\real ^3$ and hence a basis for $\sscript _3$.
(ii)\enspace If $u=ne+pf+qg$, $n,p,q\in\integers$, is a unit vector, we have that
\begin{equation*}
1=\doubleab{u}^2=n^2+p^2+q^2+np+nq+pq=\tfrac{1}{2}\sqbrac{(n+p)^2+(n+q)^2+(p+q)^2}
\end{equation*}
Hence,
\begin{equation}         
\label{eq22}
(n+p)^2+(n+q)^2+(p+q)^2=2
\end{equation}
The only way that \eqref{eq22} can hold is if two of the terms on the left side of \eqref{eq22} are 1 and the other is 0. This is possible if and only if two of the numbers $n,p,q$ are 0 and the third is $\pm 1$ or if one of the numbers is 1, another is $-1$ and the third is 0. This gives the 12 possibilities listed above.
\end{proof}

A \textit{triad} is three unit vectors in $\sscript _3$ whose inner products are $1/2$. There are 8 triads given by:
$\brac{e,f,g}$, $\brac{-e,-f,-g}$, $\brac{e,e-f,e-g}$, $\brac{-e,f-e,g-e}$, $\brac{f,f-e,f-g}$, $\brac{-f,e-f,g-f}$, $\brac{g,g-e,g-f}$,
$\quad\brac{-g,e-g,f-g}$. Each triad corresponds to three \textit{triples} written in cyclic order. For example, corresponding to triad $\brac{e,f,g}$ we have the triples $(e,f,g),(f,g,e),(g,e,f)$. We call $(e,f,g)$ the \textit{basic triple}. A \textit{symmetry} on $\sscript _3$ is a linear transformation
$T\colon\sscript _3\to\sscript _3$ that takes triples to triples and has determinant 1. The symmetries are determined by their action on the basic triple. For example if $Te=f$, $Tf=f-e$ and $Tg=f-g$, then $T$ preserves all triples. For instance,
\begin{align*}
T(-g)&=g-f\\
T(e-g)&=Te-Tg=f-f+g=g\\
T(f-g)&=Tf-Tg=f-e-f+g=g-e
\end{align*}
Hence, $T(-g,e-g,f-g)=(g-f,g,g-e)$.

Since there are 24 triples, we conclude that there are 24 symmetries. The simplest are the identity $I(e,f,g)=(e,f,g)$ and $A(e,f,g)=(f,g,e)$. We can write the symmetries relative to the basis $\brac{e,f,g}$ as follows:
\begin{align*}
&\hskip -10pt A=\begin{bmatrix}\noalign{\smallskip}
0&0&1\\1&0&0\\0&1&0\\\noalign{\smallskip}\end{bmatrix},\ 
B=\begin{bmatrix}\noalign{\smallskip}
0&1&0\\0&0&1\\1&0&0\\\noalign{\smallskip}\end{bmatrix},\ 
C=\begin{bmatrix}\noalign{\smallskip}
-1&-1&-1\\0&1&0\\1&0&0\\\noalign{\smallskip}\end{bmatrix},\ 
D=\begin{bmatrix}\noalign{\smallskip}
0&0&1\\0&1&0\\-1&-1&-1\\\noalign{\smallskip}\end{bmatrix}\\\noalign{\medskip}
&\hskip -10pt E=\begin{bmatrix}\noalign{\smallskip}
1&0&0\\-1&-1&-1\\0&1&0\\\noalign{\smallskip}\end{bmatrix},\ 
F=\begin{bmatrix}\noalign{\smallskip}
1&0&0\\0&0&1\\-1&-1&-1\\\noalign{\smallskip}\end{bmatrix},\ 
G=\begin{bmatrix}\noalign{\smallskip}
-1&-1&-1\\1&0&0\\0&0&1\\\noalign{\smallskip}\end{bmatrix},\ 
H=\begin{bmatrix}\noalign{\smallskip}
0&1&0\\-1&-1&-1\\0&0&1\\\noalign{\smallskip}\end{bmatrix}\\\noalign{\medskip}
&\hskip -10pt I=\begin{bmatrix}\noalign{\smallskip}
1&0&0\\0&1&0\\0&0&1\\\noalign{\smallskip}\end{bmatrix},\ 
J=\begin{bmatrix}\noalign{\smallskip}
-1&-1&-1\\0&0&1\\0&1&0\\\noalign{\smallskip}\end{bmatrix},\ 
K=\begin{bmatrix}\noalign{\smallskip}
0&1&0\\1&0&0\\-1&-1&-1\\\noalign{\smallskip}\end{bmatrix},\ 
L=\begin{bmatrix}\noalign{\smallskip}
0&0&1\\-1&-1&-1\\1&0&0\\\noalign{\smallskip}\end{bmatrix}\\\noalign{\medskip}
&\hskip -10pt M=\begin{bmatrix}\noalign{\smallskip}
1&-1&1\\0&-1&0\\0&0&-1\\\noalign{\smallskip}\end{bmatrix},\ 
N=\begin{bmatrix}\noalign{\smallskip}
0&0&-1\\1&1&1\\0&-1&0\\\noalign{\smallskip}\end{bmatrix},\ 
O=\begin{bmatrix}\noalign{\smallskip}
0&-1&0\\0&0&-1\\1&1&1\\\noalign{\smallskip}\end{bmatrix},\ 
P=\begin{bmatrix}\noalign{\smallskip}
-1&0&0\\0&-1&0\\1&1&1\\\noalign{\smallskip}\end{bmatrix}\\\noalign{\medskip}
&\hskip -10pt Q=\begin{bmatrix}\noalign{\smallskip}
0&0&-1\\0&-1&0\\-1&0&0\\\noalign{\smallskip}\end{bmatrix},\ 
R=\begin{bmatrix}\noalign{\smallskip}
1&1&1\\-1&0&0\\0&-1&0\\\noalign{\smallskip}\end{bmatrix},\ 
S=\begin{bmatrix}\noalign{\smallskip}
1&1&1\\0&0&-1\\-1&0&0\\\noalign{\smallskip}\end{bmatrix},\ 
T=\begin{bmatrix}\noalign{\smallskip}
-1&0&0\\1&1&1\\0&0&-1\\\noalign{\smallskip}\end{bmatrix}\\\noalign{\medskip}
&\hskip -10pt U=\begin{bmatrix}\noalign{\smallskip}
0&-1&0\\-1&0&0\\0&0&-1\\\noalign{\smallskip}\end{bmatrix},\ 
V=\begin{bmatrix}\noalign{\smallskip}
-1&0&0\\0&0&-1\\0&-1&0\\\noalign{\smallskip}\end{bmatrix},\ 
W=\begin{bmatrix}\noalign{\smallskip}
0&-1&0\\1&1&1\\-1&0&0\\\noalign{\smallskip}\end{bmatrix},\ 
X=\begin{bmatrix}\noalign{\smallskip}
0&0&-1\\-1&0&0\\1&1&1\\\noalign{\smallskip}\end{bmatrix}\\
\end{align*}

We denote the group of symmetries $\brac{A,B,\ldots ,X}$ by $\gscript (\sscript _3)$ There are other linear transformations on $\sscript _3$ that take triples to triples but these do not have unit determinant. For example,
\begin{equation*}
A'=\begin{bmatrix}\noalign{\smallskip}
1&0&0\\0&0&1\\0&1&0\\\noalign{\smallskip}\end{bmatrix}
\end{equation*}
has $\det A=-1$. Even though $\gscript (\sscript _c^3)$ and $\gscript (\sscript _3)$ both have order 24, they are different groups and we believe that $\gscript (\sscript _3)$ is more interesting. The group table for $\gscript (\sscript _3)$ is given by Table~1.


Some of the subgroups of $\gscript (\sscript _3)$ are: $\brac{I,A,B}$, $\brac{I,C,D}$, $\brac{I,E,F}$, $\brac{I,G,H}$, $\brac{I,W,J,X}$,
$\brac{I,A,B,U,V,Q}$, $\brac{A,B,\ldots ,L}$, $\brac{I,J}$, $\brac{I,K}$, $\brac{I,L}$, $\brac{I,T}$, $\brac{I,U}$, $\brac{I,V}$, $\brac{I,M}$. We can check from Table~1 that $M$ and $N$ generate $\gscript (\sscript _3)$. In fact, any two non-commuting elements from $\brac{M,N\ldots ,X}$ are generators. As usual, a linear transformation $Z\colon\sscript _3\to\sscript _3$ is an \textit{isometry} if $\elbows{Zu,Zv}=\elbows{u,v}$ for all $u,v\in\sscript _3$. It is clear that an isometry with unit determinant is a symmetry. Conversely, it is easy to check that $M$ and $N$ are isometries and since $M$ and $N$ generate $\gscript (\sscript _3)$, it follows that every element of $\gscript (\sscript _3)$ is an isometry. We conclude that our definition of a symmetry on $\sscript _3$ coincides with our original definition of a symmetry.

{\parindent=-75pt
\begin{tabular}{|L|L|L|L|L|L|L|L|L|L|L|L|L|L|L|L|L|L|L|L|L|L|L|L|L|L|L|L|}
&I&A&B&C&D&E&F&G&H&J&K&L&M&N&O&P&Q&R&S&T&U&V&W&X\\
\hline
I&I&A&B&C&D&E&F&G&H&J&K&L&M&N&O&P&Q&R&S&T&U&V&W&X\\
\hline
A&A&B&I&E&J&K&G&L&D&H&C&F&N&O&M&R&V&W&T&X&Q&U&P&S\\
\hline
B&B&I&A&K&H&C&L&F&J&D&E&G&O&M&N&W&U&P&X&S&V&Q&R&T\\
\hline
C&C&G&J&D&I&L&B&K&E&F&A&H&P&T&V&Q&M&X&O&W&R&S&N&U\\
\hline
D&D&K&F&I&C&H&J&A&L&B&G&E&Q&W&S&M&P&U&V&N&X&O&T&R\\
\hline
E&E&L&H&J&A&F&I&C&K&G&B&D&R&X&U&V&N&S&M&P&W&T&O&Q\\
\hline
F&F&D&K&G&L&I&E&J&B&C&H&A&S&Q&W&T&X&M&R&V&O&P&U&N\\
\hline
G&G&J&C&L&F&A&K&H&I&E&D&B&T&V&P&X&S&N&W&U&M&R&Q&O\\
\hline
H&H&E&L&B&K&J&D&I&G&A&F&C&U&R&X&O&W&V&Q&M&T&N&S&P\\
\hline
J&J&C&G&A&E&D&H&B&F&I&L&K&V&P&T&N&R&Q&U&O&S&M&X&W\\
\hline
K&K&F&D&H&B&G&A&E&C&L&I&J&W&S&Q&U&O&T&N&R&P&X&M&V\\
\hline
L&L&H&E&F&G&B&C&D&A&K&J&I&X&U&R&S&T&O&P&Q&N&W&V&M\\
\hline
M&M&R&S&Q&P&N&O&U&T&V&X&W&I&E&F&D&C&A&B&H&G&J&L&K\\
\hline
N&N&W&T&V&R&O&M&Q&X&U&S&P&A&K&G&J&E&B&I&D&L&H&F&C\\
\hline
O&O&P&X&U&W&M&N&V&S&Q&T&R&B&C&L&H&K&I&A&J&F&D&G&E\\
\hline
P&P&X&O&M&Q&T&V&R&W&S&U&N&C&L&B&I&D&G&J&E&K&F&H&A\\
\hline
Q&Q&U&V&P&M&W&S&X&N&O&R&T&D&H&J&C&I&K&F&L&A&B&E&G\\
\hline
R&R&S&M&N&V&X&U&W&P&T&Q&O&E&F&I&A&J&L&H&K&C&G&D&B\\
\hline
S&S&M&R&X&T&Q&W&O&V&P&N&U&F&I&E&L&G&D&K&B&J&C&A&H\\
\hline
T&T&N&W&S&X&V&P&M&U&R&O&Q&G&A&K&F&L&J&C&I&H&E&B&D\\
\hline
U&U&V&Q&W&O&R&X&T&M&N&P&S&H&J&D&K&B&E&L&G&I&A&C&F\\
\hline
V&V&Q&U&R&N&P&T&S&O&M&W&X&J&D&H&E&A&C&G&F&B&I&K&L\\
\hline
W&W&T&N&O&U&S&Q&P&R&X&V&M&K&G&A&B&H&F&D&C&E&L&J&I\\
\hline
X&X&O&P&T&S&U&R&N&Q&W&M&V&L&B&C&G&F&H&E&A&D&K&I&J\\
\hline\noalign{\medskip}
\multicolumn{24}{c}{\textbf{Table 1}}\\
\end{tabular}
\parindent=18pt}

\section{Unitary Representations} 
This section constructs unitary representations for the group $\gscript (\sscript _3)$. The matrix realizations that we gave for $\gscript (\sscript _3)$ were not necessarily unitary because they were given relative to the nonorthogonal basis $\brac{e,f,g}$. Since these matrices are isometries, if we represent them in the standard basis $\ehat _1=(1,0,0)$, $\ehat _2=(0,1,1)$, $\ehat _3=(0,0,1)$, they will become unitary. Using the concrete form \eqref{eq21} for $e,f,g$, the two bases are related by
\begin{align*}
e&=\ehat _1,\quad f=\frac{1}{2}\,\ehat _1+\frac{\sqrt{3}}{2}\,\ehat _2,\quad
g=\frac{1}{2}\,\ehat _1+\frac{1}{2\sqrt{3}}\,\ehat _2+\sqrt{\frac{2}{3}}\,\,\ehat _3\\
\intertext{and}
\ehat _1&=e,\quad\ehat _2=-\frac{1}{\sqrt{3}}\,e+\frac{2}{\sqrt{3}}\,f,\quad \ehat _3=-\frac{1}{\sqrt{6}}\,e-\frac{1}{\sqrt{6}}\,f+\sqrt{\frac{3}{2}}\,\,g
\end{align*}
The basis transformations become:
\begin{equation*}
U=\begin{bmatrix}\noalign{\smallskip}
1&-1/\sqrt{3}&-1/\sqrt{6}\\0&2/\sqrt{3}&-1/\sqrt{6}\\0&0&\sqrt{3}/2\\\noalign{\smallskip}\end{bmatrix},\quad
U^{-1}=\begin{bmatrix}\noalign{\smallskip}
1&1/2&1/2\\0&\sqrt{3}/2&1/2\sqrt{3}\\0&0&\sqrt{2/3}\\\noalign{\smallskip}\end{bmatrix}
\end{equation*}
Define the unitary representation $\uscript$ of $\gscript (\sscript _3)$ on $\complex ^3$ by
\begin{equation*}
\uscript (Z)=U^{-1}ZU
\end{equation*}
for all $Z\in\gscript (\sscript _3)$. This is a group representation because
\begin{equation*}
\uscript (YZ)=U^{-1}YZU=U^{-1}YUU^{-1}ZU=\uscript (Y)\uscript (Z)
\end{equation*}

Since $M$ and $N$ generate $\gscript (\sscript _3)$ we find $\uscript (M)$, $\uscript (N)$ to be:
\begin{equation*}
\uscript (M)=\begin{bmatrix}\noalign{\smallskip}
1&0&0\\0&-1&0\\0&0&-1\\\noalign{\smallskip}\end{bmatrix},\quad
\uscript (N)=\begin{bmatrix}\noalign{\smallskip}
1/2&-1/2\sqrt{3}&-\sqrt{2/3}\\\sqrt{3}/2&1/6&\sqrt{2}/3\\0&-2\sqrt{2}/3&1/3\\\noalign{\smallskip}\end{bmatrix}
\end{equation*}
Clearly, $\uscript (M)$ is a unitary and it is easy to check that $\uscript (N)$ is unitary. In fact, they are orthogonal matrices because they have real entries. Since $M$ and $N$ generate $\gscript (\sscript _3)$, it again follows that $\uscript (Z)$ is unitary for all $Z\in\gscript (\sscript _3)$ so
$\uscript$ is a unitary representation. Since $\gscript (\sscript _3)$ and $\uscript\sqbrac{\gscript (\sscript _3)}$ are isomorphic groups, we can and frequently will identify them.

All the matrices in $\gscript (\sscript _3)$ have eigenvalues among the numbers $\pm 1$, $\pm i$, $e^{\pm 2\pi i/3}$ with possible multiplicities. Of course, their corresponding eigenvectors are different, in general. For example, $A$ has eigenvalues $1,e^{2\pi i/3},e^{-2\pi i/3}$ with corresponding (unnormalized) eigenvectors $(1,1,1)$, $(e^{2\pi i/3},1,e^{-2\pi i/3})$, $(e^{-2\pi i/3},1,e^{2\pi i/3})$ relative to the $\brac{e,f,g}$ basis. It is easy to check that $B,C,\ldots ,H$ have the same eigenvalues as $A$. A new pattern begins with $J$ which has eigenvalues $1,-1$ (multiplicity 2) with corresponding eigenvectors $(1,-1,-1)$, $(1,-1,1)$, $(1,0,0)$. The matrices $K,L,M,Q$ have the same eigenvalues as $J$. The matrix $M$ has eigenvalues $1,i,-i$ and corresponding eigenvectors $(1,0,0)$, $(1,-1,-i)$, $(1,-1,i)$. The other matrices are similar to those already computed.

We can apply this work to find eigenvalues and eigenvectors for the unitary matrices in $\uscript\sqbrac{\gscript (\sscript _3)}$. For example, we have found the eigenvalues $\lambda _1,\lambda _2,\lambda _3$ and corresponding eigenvectors $u_1,u_2,u_3$ for $A\in\gscript (\sscript _3)$. Now $\uscript (A)=U^{-1}AU$ so the eigenvalues of $\uscript (A)$ are $\lambda _j$, $j=1,2,3$ with corresponding eigenvectors $U^{-1}u_j$, $j=1,2,3$. In particular, the eigenvalues of $\uscript (A)$ are $1,e^{2\pi i/3},e^{-2\pi i/3}$ with corresponding eigenvectors
\begin{align*}
U^{-1}&=\begin{bmatrix}\noalign{\smallskip}1\\1\\1\\\noalign{\smallskip}\end{bmatrix}=
\begin{bmatrix}\noalign{\smallskip}2\\2\sqrt{3}\\\sqrt{2/3}\\\noalign{\smallskip}\end{bmatrix}\\\noalign{\smallskip}
U^{-1}&=\begin{bmatrix}\noalign{\smallskip}e^{2\pi i/3}\\1\\e^{-2\pi i/3}\\\noalign{\smallskip}\end{bmatrix}=\frac{1}{4}
\begin{bmatrix}\noalign{\smallskip}5+i\sqrt{3}\\5\sqrt{3}-i\\2^{3/2}(1/\sqrt{3}-i)\\\noalign{\smallskip}\end{bmatrix}\\\noalign{\smallskip}
U^{-1}&=\begin{bmatrix}\noalign{\smallskip}e^{-2\pi i/3}\\1\\e^{2\pi i/3}\\\noalign{\smallskip}\end{bmatrix}=\frac{1}{4}
\begin{bmatrix}\noalign{\smallskip}5-i\sqrt{3}\\5\sqrt{3}+i\\2^{3/2}(1/\sqrt{3}+i)\\\noalign{\smallskip}\end{bmatrix}
\end{align*}

It is useful to find the eigenvalues and eigenvectors for the unitary operators $\uscript\sqbrac{\gscript (\sscript _3)}$. This is because the self-adjoint generator of $\uscript (Z)$, $Z\in\gscript (\sscript _3)$, which gives an angular momentum operator can then be derived. For example if
$\uscript (A)=e^{iA'}$ has generator $A'$, then the eigenvalues of $A'$ are $0,-2\pi /3,2\pi /3$ and the eigenvectors of $A'$ are the same as those for $\uscript (A)$.

We now construct a unitary representation of $\uscript\sqbrac{\gscript (\sscript _3)}$ on the 2-dimensional Hilbert space $\complex ^2$. The standard construction goes as follows \cite{sw64,vel94}. For $\capahat\in\uscript\sqbrac{\gscript (\sscript _3)}$, the corresponding unitary matrix on $\complex ^2$ is given by
\begin{equation*}
\rscript '(\capahat )=\begin{bmatrix}\noalign{\smallskip}
a&b\\-\overb&\overa\\\noalign{\smallskip}\end{bmatrix}
\end{equation*}
and $a,b\in\complex$ satisfy the following seven equations. In these equations, we denote a vector $u\in\real ^3$ by its Cartesian coordinates $u=(u_1,u_2,u_3)$.
\begin{align}        
\label{eq31}    
a^2-b^2&=(\capahat\ehat _1)_1-i(\capahat\ehat _1)_2\\
\label{eq32}    
ab&=-\tfrac{1}{2}(\capahat\ehat _3)_1+\tfrac{i}{2}(\capahat\ehat _3)_2\\
\label{eq33}    
a\overb &=\tfrac{1}{2}(\capahat\ehat _1)_3+\tfrac{i}{2}(\capahat\ehat _2)_2\\
\label{eq34}    
\rmre (a^2+b^2)&=(\capahat\ehat _2)_2\\
\label{eq35}    
\rmim (b^2-\overa ^2)&=(\capahat\ehat _2)_1\\
\label{eq36}    
\ab{b}^2&=\tfrac{1}{2}\sqbrac{1-(\capahat\ehat _3)_3}\\
\label{eq37}    
\ab{a}^2+\ab{b}^2&=1
\end{align}
We define the unitary representation $\rscript$ of $\gscript (\sscript _3)$ on $\complex ^2$ by $\rscript (Z)=\rscript '\sqbrac{\uscript (X)}$,
$Z\in\gscript (\sscript _3)$. Following the above procedure we obtain
\begin{equation*}
\rscript (M)=i\begin{bmatrix}\noalign{\smallskip}0&1\\1&0\\\noalign{\smallskip}\end{bmatrix}
\end{equation*}
We also have that
\begin{equation*}
\rscript (N)=\frac{1}{\sqrt{3}}\begin{bmatrix}\noalign{\smallskip}\sqrt{2}e^{-i\pi /6}&e^{i\pi /3}\\
-e^{-i\pi /3}&\sqrt{2}e^{i\pi /6}\\\noalign{\smallskip}\end{bmatrix}
\end{equation*}

Since $M$ and $N$ generate $\gscript (\sscript _3)$ we can obtain $\rscript (Z)$ for every $Z\in\gscript (\sscript _3)$ by repeated applications of
$\rscript (M)$ and $\rscript (N)$ using Table~1. These are now listed.
\begin{align*}
\rscript (A)&=\frac{i}{\sqrt{3}}\begin{bmatrix}\noalign{\smallskip}e^{i\pi /3}&\sqrt{2}e^{-i\pi /6}\\\noalign{\smallskip}
\sqrt{2}e^{i\pi /6}&-e^{-i\pi /3}\\\noalign{\smallskip}\end{bmatrix},\quad
\rscript (B)=-\frac{1}{\sqrt{3}}\begin{bmatrix}\noalign{\smallskip}e^{i\pi /6}&\sqrt{2}e^{i\pi /3}\\\noalign{\smallskip}
-\sqrt{2}e^{-i\pi /3}&e^{-i\pi /6}\\\noalign{\smallskip}\end{bmatrix}\\\noalign{\smallskip}
\rscript (C)&=\frac{1}{\sqrt{3}}\begin{bmatrix}\noalign{\smallskip}e^{i5\pi /6}&-\sqrt{2}\\\noalign{\smallskip}
\sqrt{2}&e^{-i\pi /6}\\\noalign{\smallskip}\end{bmatrix},\quad
\rscript (D)=-\frac{1}{\sqrt{3}}\begin{bmatrix}\noalign{\smallskip}e^{i\pi /6}&-\sqrt{2}\\\noalign{\smallskip}
\sqrt{2}&e^{-i\pi /6}\\\noalign{\smallskip}\end{bmatrix}\\\noalign{\smallskip}
\rscript (E)&=\frac{i}{\sqrt{3}}\begin{bmatrix}\noalign{\smallskip}-e^{-i\pi /3}&\sqrt{2}e^{i\pi /6}\\\noalign{\smallskip}
\sqrt{2}e^{-i\pi /6}&e^{i\pi /3}\\\noalign{\smallskip}\end{bmatrix},\quad
\rscript (F)=\frac{1}{\sqrt{3}}\begin{bmatrix}\noalign{\smallskip}e^{i5\pi /6}&\sqrt{2}e^{i2\pi /3}\\\noalign{\smallskip}
-\sqrt{2}e^{-i2\pi /3}&e^{-i5\pi /6}\\\noalign{\smallskip}\end{bmatrix}\\\noalign{\smallskip}
\rscript (G)&=i\begin{bmatrix}\noalign{\smallskip}e^{i\pi /6}&0\\\noalign{\smallskip}
0&-e^{-i\pi /6}\\\noalign{\smallskip}\end{bmatrix},\quad
\rscript (H)=i\begin{bmatrix}\noalign{\smallskip}-e^{-i\pi /6}&0\\\noalign{\smallskip}
0&-e^{i\pi /6}\\\noalign{\smallskip}\end{bmatrix}\\\noalign{\smallskip}
\rscript (I)&=\begin{bmatrix}\noalign{\smallskip}1&0\\\noalign{\smallskip}0&1\\\noalign{\smallskip}\end{bmatrix},\quad
\rscript (J)=\frac{1}{\sqrt{3}}\begin{bmatrix}\noalign{\smallskip}-i&-\sqrt{2}\\\noalign{\smallskip}
\sqrt{2}&i\\\noalign{\smallskip}\end{bmatrix}\\\noalign{\smallskip}
\rscript (K)&=\frac{1}{\sqrt{3}}\begin{bmatrix}\noalign{\smallskip}e^{-i\pi /2}&\sqrt{2}e^{i\pi /3}\\\noalign{\smallskip}
-\sqrt{2}e^{-i\pi /3}&e^{i\pi /2}\\\noalign{\smallskip}\end{bmatrix},\quad
\rscript (L)=-\frac{1}{\sqrt{3}}\begin{bmatrix}\noalign{\smallskip}e^{i\pi /2}&\sqrt{2}e^{i2\pi /3}\\\noalign{\smallskip}
-\sqrt{2}e^{-i2\pi /3}&e^{-i\pi /2}\\\noalign{\smallskip}\end{bmatrix}\\\noalign{\smallskip}
\rscript (O)&=\frac{1}{\sqrt{3}}\begin{bmatrix}\noalign{\smallskip}\sqrt{2}e^{-i\pi /6}&e^{-i\pi /3}\\\noalign{\smallskip}
-e^{i\pi /3}&\sqrt{2}e^{i\pi /6}\\\noalign{\smallskip}\end{bmatrix},\quad
\rscript (P)=\frac{i}{\sqrt{3}}\begin{bmatrix}\noalign{\smallskip}\sqrt{2}&e^{-i\pi /6}\\\noalign{\smallskip}
e^{i\pi /6}&-\sqrt{2}\\\noalign{\smallskip}\end{bmatrix}\\\noalign{\smallskip}
\rscript (Q)&=\frac{i}{\sqrt{3}}\begin{bmatrix}\noalign{\smallskip}\sqrt{2}&-e^{-i\pi /6}\\\noalign{\smallskip}
-e^{i\pi /6}&-\sqrt{2}\\\noalign{\smallskip}\end{bmatrix},\quad
\rscript (R)=-\frac{1}{\sqrt{3}}\begin{bmatrix}\noalign{\smallskip}\sqrt{2}e^{i\pi /6}&-e^{-i\pi /3}\\\noalign{\smallskip}
e^{i\pi /3}&\sqrt{2}e^{-\pi /6}\\\noalign{\smallskip}\end{bmatrix}\\\noalign{\smallskip}
\rscript (S)&=\frac{1}{\sqrt{3}}\begin{bmatrix}\noalign{\smallskip}\sqrt{2}e^{i\pi /6}&-e^{i\pi /3}\\\noalign{\smallskip}
-e^{i\pi /6}&-\sqrt{2}\\\noalign{\smallskip}\end{bmatrix},\quad
\rscript (T)=\begin{bmatrix}\noalign{\smallskip}0&-e^{i\pi /6}\\\noalign{\smallskip}
e^{-i\pi /6}&0\\\noalign{\smallskip}\end{bmatrix}\\\noalign{\smallskip}
\rscript (U)&=\begin{bmatrix}\noalign{\smallskip}0&-e^{-i\pi /6}\\\noalign{\smallskip}
e^{i\pi /6}&0\\\noalign{\smallskip}\end{bmatrix},\quad
\rscript (V)=\frac{1}{\sqrt{3}}\begin{bmatrix}\noalign{\smallskip}i\sqrt{2}&1\\\noalign{\smallskip}
-1&-i\sqrt{2}\\\noalign{\smallskip}\end{bmatrix}\\\noalign{\smallskip}
\rscript (W)&=\frac{1}{\sqrt{3}}\begin{bmatrix}\noalign{\smallskip}\sqrt{2}e^{i5\pi /6}&1\\\noalign{\smallskip}
-1&\sqrt{2}e^{-i5\pi /6}\\\noalign{\smallskip}\end{bmatrix},\quad
\rscript (X)=\frac{i}{\sqrt{3}}\begin{bmatrix}\noalign{\smallskip}\sqrt{2}e^{i\pi /6}&1\\\noalign{\smallskip}
-1&\sqrt{2}e^{-i\pi /6}\\\noalign{\smallskip}\end{bmatrix}\\\noalign{\smallskip}
\end{align*}

Technically speaking, $\rscript$ is a \textit{projective representation} of $\gscript (\sscript _3)$ on $\complex ^2$. That is
$\rscript (YZ)=\pm\rscript (Y)\rscript (Z)$ in general. However, this is not important because quantum states are only determined within a scalar multiple of absolute value one. Examples are $GH=I$ but $\rscript (G)\rscript (H)=-I$ and $J^2=I$ and $\rscript (J)^2=-I$.

\section{Discrete Spacetime} 
We have previously considered the space lattice $\sscript _3$. We now adjoin time to obtain a spacetime lattice. Let $d,e,f,g\in\real$ be unit vectors satisfying
\begin{align*}
\elbows{d,e}=\elbows{d,f}=\elbows{d,g}=0
\intertext{and}
\elbows{e,f}=\elbows{e,g}=\elbows{f,g}=1/2
\end{align*}
Then
\begin{equation*}
\sscript _4=\brac{td+ne+pf+qg\colon t,n,p,q\in\integers}
\end{equation*}
is a 4-dimensional module over $\integers$ with basis $\brac{d,e,f,g}$. Clearly, $\sscript _3$ is a 3-dimensional submodule of $\sscript _4$. We are mainly concerned with the subset $\sscript _4^+\subseteq\sscript _4$ of vectors with $t\ge 0$. We frequently call these vectors \textit{vertices} and consider them to be tiny spacetime cells that may be occupied by a particle. If $u=td+ne+pf+qg$, we write $u=(t,n,p,q)$ and use the notation $u^0=t$, $u^1=n$, $u^2=p$, $u^3=q$.

We define the usual norm $\doubleab{\ctimes}_3$ on $\sscript _3$ by
\begin{equation*}
\doubleab{ne+pf+qg}_3^2=n^2+p^2+q^2+np+nq+pq
\end{equation*}
and the indefinite norm $\doubleab{\ctimes}_4$ on $\sscript _4$ by
\begin{equation*}
\doubleab{td+ne+pf+qg}_4^2=t^2-\doubleab{ne+pf+qg}_3^2
\end{equation*}
We sometimes write $\doubleab{u}_4=(u^0)^2-\doubleab{\underu}_3^2$ where $\underu =(u^1,u^2,u^3)\in\real ^3$. Of course, the
\textit{spacetime distance between} $u,v\in\sscript _4$ is $\doubleab{u-v}_4$. For $t\ge 0$, we define
\begin{equation*}
\sscript ^t=\brac{u\in\sscript _4^+\colon u^0=t,\ \doubleab{u}_4^2\ge 0}
\end{equation*}
and we call $\sscript ^t$ the \textit{universe at time} $t$. We call
\begin{equation*}
\overs  ^t=\cup\brac{\sscript ^t\colon 0\le t'\le t,t'\in\integers}
\end{equation*}
the \textit{universe history until time} $t$. For $u\in\sscript _4^+$, the \textit{forward light cone at} $u$ is
\begin{equation*}
\cscript _u^+=\brac{v\in\sscript _4^+\colon v^0\ge u^0,\ \doubleab{v-u}_4^2\ge 0}
\end{equation*}
As usual $\cscript _u^+$ is the set of vertices that $u$ can reach with a physical signal. In other words $\cscript _u^+$ is the set of vertices that $u$ can influence. Of course,
\begin{equation*}
\cscript _0^+=\brac{u\in\sscript _4\colon u^0\ge 0,\ \doubleab{u}_4\ge 0}=\bigcup _{t\ge0}\sscript ^t=\bigcup _{t\ge 0}\overs  ^t
\end{equation*}
We consider $\cscript _0^+$ to be the spacetime background and shall only consider vertices in $\cscript _0^+$. The \textit{forward null surface} at $u\in\cscript _0^+$ is 
\begin{equation*}
\eta _u^+=\brac{v\in\sscript _4\colon v^0\ge u^0,\ \doubleab{v-u}_4^2=0}
\end{equation*}
We interpret $\eta _u^+$ as the vertices that $u$ can reach with a light signal. For $u,v\in\cscript _0^+$ we write $u<v$ and say that $v$ is in the
\textit{causal future} of $u$ if $u^0<v^0$ and $\doubleab{v-u}_4^2\ge 0$. Of course, $u<v$ if and only if $v\in\cscript _u^+\smallsetminus\brac{u}$.

\begin{thm}       
\label{thm41}
The relation $<$ is a partial order on $\cscript _0^+$.
\end{thm}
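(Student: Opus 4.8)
The plan is to verify the three defining properties of a strict partial order on $\cscript _0^+$: irreflexivity, asymmetry, and transitivity. Writing out $\doubleab{v-u}_4^2=(v^0-u^0)^2-\doubleab{\underv-\underu}_3^2$, the relation $u<v$ reads $u^0<v^0$ together with $(v^0-u^0)^2\ge\doubleab{\underv-\underu}_3^2$; since $v^0-u^0>0$, the latter is equivalent to the single scalar inequality $v^0-u^0\ge\doubleab{\underv-\underu}_3$. Irreflexivity is then immediate, because $u<u$ would force $u^0<u^0$, and asymmetry is equally immediate, because $u<v$ and $v<u$ together would force $u^0<v^0<u^0$. Thus the entire content of the theorem reduces to transitivity.

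For transitivity, suppose $u<v$ and $v<w$ in $\cscript _0^+$. The time-order part is harmless: $u^0<v^0<w^0$ yields $u^0<w^0$. The work is to establish $\doubleab{w-u}_4^2\ge0$, equivalently $w^0-u^0\ge\doubleab{\underw-\underu}_3$. From the two hypotheses I have $v^0-u^0\ge\doubleab{\underv-\underu}_3$ and $w^0-v^0\ge\doubleab{\underw-\underv}_3$, and adding these gives
\begin{equation*}
w^0-u^0\ge\doubleab{\underv-\underu}_3+\doubleab{\underw-\underv}_3 .
\end{equation*}
It then suffices to bound the right-hand side below by $\doubleab{\underw-\underu}_3$, i.e.\ to invoke the triangle inequality for $\doubleab{\ctimes}_3$ applied to the decomposition $\underw-\underu=(\underw-\underv)+(\underv-\underu)$.

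The one thing that must genuinely be checked is therefore that $\doubleab{\ctimes}_3$ is a bona fide norm satisfying the triangle inequality, and this is where the structure of $\sscript _3$ enters. Because $e,f,g$ are unit vectors in $\real ^3$ with pairwise inner products $1/2$, a direct expansion gives $\doubleab{ne+pf+qg}_3^2=\elbows{ne+pf+qg,ne+pf+qg}=n^2+p^2+q^2+np+nq+pq$, so $\doubleab{\ctimes}_3$ is exactly the restriction to $\sscript _3$ of the ordinary Euclidean norm on $\real ^3$. Consequently it inherits subadditivity, and the chain of inequalities above closes. No domain subtlety arises: $u,v,w\in\cscript _0^+$ by hypothesis and all the differences lie in the $\integers$-module $\sscript _4$, and we never need the differences themselves to belong to $\cscript _0^+$, only the two scalar inequalities above. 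The main (and essentially only) obstacle is thus the triangle inequality, which costs nothing once one recognizes $\doubleab{\ctimes}_3$ as the genuine Euclidean length.
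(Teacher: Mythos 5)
Your proof is correct, and it takes a slightly different (and in fact cleaner) route than the paper's. The paper stays at the level of squared norms: it asserts
\begin{equation*}
(w^0-u^0)^2-\doubleab{\underw-\underu}_3^2\ \ge\ (w^0-u^0)^2-\doubleab{\underw-\underv}_3^2-\doubleab{\underv-\underu}_3^2
\end{equation*}
"by the triangle inequality," then expands $(w^0-u^0)^2=(w^0-v^0)^2+(v^0-u^0)^2+2(w^0-v^0)(v^0-u^0)$ and concludes. Read literally, that first displayed inequality is false in general (squared norms are not subadditive: the cross term $2\doubleab{\underw-\underv}_3\doubleab{\underv-\underu}_3$ is missing), and the argument only closes because the expansion supplies the compensating cross term via $(w^0-v^0)(v^0-u^0)\ge\doubleab{\underw-\underv}_3\doubleab{\underv-\underu}_3$, a step the paper leaves implicit. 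Your version sidesteps this bookkeeping entirely: since $v^0-u^0>0$, you take square roots once and for all, reducing $u<v$ to the linear inequality $v^0-u^0\ge\doubleab{\underv-\underu}_3$, so that transitivity is just addition of two scalar inequalities followed by a single application of the genuine triangle inequality. You also do something the paper omits: you verify that $\doubleab{\ctimes}_3$ really is the restriction of the Euclidean norm to $\sscript_3$ (by expanding $\elbows{ne+pf+qg,ne+pf+qg}$ using the inner products $1/2$), which is exactly what licenses subadditivity; the paper invokes the triangle inequality without comment. In short: same underlying idea (time coordinates order linearly, spatial distances obey the Euclidean triangle inequality), but your linearized execution is tighter and repairs a sloppy step in the paper's write-up, at the trivial extra cost of checking asymmetry, which the paper leaves tacit.
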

\begin{proof}
Clearly $u\not <u$ for all $u\in\cscript _0^+$. Suppose $u,v,w\in\cscript _0^+$ with $u<v$ and $v<w$. Then $u^0<v^0<w^0$ and
\begin{equation}         
\label{eq41}
\begin{aligned}
(v^0-u^0)^2-\doubleab{\underv -\underu}_3^2&=\doubleab{v-u}_4^2\ge 0\\
(w^0-v^0)^2-\doubleab{\underw -\underv}_3^2&=\doubleab{u-v}_4^2\ge 0
\end{aligned}
\end{equation}
By the triangle inequality we have
\begin{align*}
\doubleab{w-u}_4^2&=(w^0-u^0)^2-\doubleab{\underw -\underu}_3^2
  \ge (w^0-u^0)^2-\doubleab{\underw -\underv}_3^2-\doubleab{\underv -\underu}_3^2\\
&=(w^0-v^0)^2+(v^0-u^0)^2+2(w^0-v^0)(v^0-u^0)-\doubleab{\underw -\underv}_3^2-\doubleab{\underv-\underu}_3^2
\end{align*}
Then $\doubleab{w-u}_4^2\ge 0$ follows from \eqref{eq41}.
\end{proof}

The partial order $<$ restricted to $\overs  ^t$ makes $\overs  ^t$ a finite poset that is frequently called a \textit{causal set} or
\textit{causet} \cite{gud13,gud151,gud152}. If $u<v$ and there is no $w\in\cscript _0^+$ with $u<w<v$, we say that $v$ is a \textit{child} of $u$ and $u$ is a \textit{parent} of $v$ and write $u\prec v$. If $u\prec v$, we call the edge $uv$ \textit{a link from} $u$ \textit{to} $v$.
A \textit{path from} $u$ \textit{to} $v$ is a sequence $w_1\prec w_2\prec\cdots\prec w_n$ where $w_1=u$ and $w_2=v$. We call $n-1$ the
\textit{length} of this path. In general, there may not be a path from $u$ to $v$.

\begin{thm}       
\label{thm42}
{\rm (i)}\enspace There exists a path from $u$ to $v$ if and only if $u<v$.
{\rm (ii)}\enspace $u\prec v$ if and only if $v^0=u^0+1$ and $\doubleab{v-u}_4=0$ or $1$.
{\rm (iii)}\enspace If $u<v$, then any two paths from $u$ to $v$ have length $v^0-u^0$.
\end{thm}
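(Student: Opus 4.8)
The plan is to reduce all three parts to a single geometric fact about the purely spatial lattice $\sscript _3$, which I will call the \emph{one-step reduction property}: for every $D\in\sscript _3$ and every integer $m\ge 1$ with $\doubleab{D}_3\le m$, there is a vector $s$ equal either to $0$ or to one of the twelve unit vectors of Lemma~\ref{lem21} such that $\doubleab{D-s}_3\le m-1$. Granting this, part (ii) is quick. Its ``if'' direction is trivial: when $v^0=u^0+1$ and $\doubleab{v-u}_4\in\brac{0,1}$ we have $u<v$, and no $w$ can satisfy $u^0<w^0<v^0=u^0+1$ since $w^0\in\integers$, so no intermediate vertex exists and $u\prec v$. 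Observe moreover that once $v^0=u^0+1$, the requirement $\doubleab{v-u}_4^2\ge 0$ forces $\doubleab{\underv-\underu}_3^2\le 1$, so $\doubleab{v-u}_4^2=1-\doubleab{\underv-\underu}_3^2$ is automatically $0$ or $1$; the norm clause is thus a consequence, not an extra hypothesis. For the ``only if'' direction I would argue by contraposition: if $u<v$ with $m=v^0-u^0\ge 2$, set $D=\underv-\underu$ so that $\doubleab{D}_3\le m$, take $s$ from the reduction property, and define $w$ by $w^0=u^0+1$ and $\underw=\underu+s$. Then $\doubleab{w-u}_4^2=1-\doubleab{s}_3^2\ge 0$ and $\doubleab{v-w}_4^2=(m-1)^2-\doubleab{D-s}_3^2\ge 0$, so $u<w<v$ and $v$ is not a child of $u$.

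Part (iii) is then immediate from (ii): every link advances the time coordinate by exactly $1$, so a path $w_1\prec\cdots\prec w_n$ from $u$ to $v$ satisfies $v^0=u^0+(n-1)$, whence its length $n-1$ equals $v^0-u^0$ independently of the path chosen. For part (i), the forward implication uses only Theorem~\ref{thm41}: each link gives $w_i<w_{i+1}$, and transitivity of $<$ yields $u<v$. The reverse implication is where the reduction property does the real work. Given $u<v$ with $m=v^0-u^0$, I would build a path inductively: starting from $w_1=u$ and $D_1=\underv-\underu$ (with $\doubleab{D_1}_3\le m$), I repeatedly pick a step $s_i$ from the property and set $\underw_{i+1}=\underw_i+s_i$, $w_{i+1}^0=w_i^0+1$, and $D_{i+1}=D_i-s_i$ with $\doubleab{D_{i+1}}_3\le m-i$. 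After $m$ steps $D_{m+1}=0$ and $w_{m+1}=v$; each consecutive pair is a link by (ii), and each $w_i$ satisfies $u\le w_i$, so with $0\le u$ it lies in $\cscript _0^+$.

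The main obstacle is the one-step reduction property itself, and I expect essentially all the difficulty to concentrate there. When $\doubleab{D}_3\le m-1$ the choice $s=0$ works, so the genuine case is $(m-1)^2<\doubleab{D}_3^2\le m^2$, where a unit neighbor $s$ must be produced. The identity $\doubleab{D-s}_3^2=\doubleab{D}_3^2-2\elbows{D,s}+1$ converts the target $\doubleab{D-s}_3^2\le(m-1)^2$ into $2\elbows{D,s}\ge\doubleab{D}_3^2-(m-1)^2+1$, so everything reduces to exhibiting a nearest neighbor whose inner product with $D$ clears this threshold. This is delicate exactly when $D$ points ``between'' the twelve neighbors rather than along one of them, and I would scrutinize those diagonal directions with care—writing $D=ne+pf+qg$ and comparing $\max\brac{\elbows{D,e},\elbows{D,f},\elbows{D,g},\ldots}$ against $\doubleab{D}_3^2=n^2+p^2+q^2+np+nq+pq$—since the whole theorem rests on the reduction never stalling. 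Confirming that the tetrahedral $\pi/3$ geometry always supplies a neighbor meeting the bound is the crux on which parts (i) and (ii) stand or fall, and I would want to check it survives the extremal ``diagonal'' displacements before regarding the argument as complete.
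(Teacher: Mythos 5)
Your reduction is honestly framed, but the deferred lemma is where the attempt fails, and not merely because it is left unproved: the one-step reduction property is \emph{false}, precisely at the diagonal displacements you flagged. Take $D=2e+2f-g$, so $\doubleab{D}_3^2=4+4+1+4-2-2=9$ and $m=3$. The inner products of $D$ with the twelve unit vectors are $\pm\tfrac{5}{2}$ (for $\pm e,\pm f$), $\pm 1$ (for $\pm g$), $0$ (for $\pm(e-f)$) and $\pm\tfrac{3}{2}$ (for $\pm(e-g),\pm(f-g)$), so $\max_s\elbows{D,s}=\tfrac{5}{2}$, strictly below the threshold $\tfrac{1}{2}\sqbrac{9-4+1}=3$ that your identity $\doubleab{D-s}_3^2=\doubleab{D}_3^2-2\elbows{D,s}+1$ requires. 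Hence $\doubleab{D-s}_3^2\ge 5>4=(m-1)^2$ for every unit $s$, and $\doubleab{D}_3^2=9>4$ rules out $s=0$: the reduction stalls. Worse, following your own contrapositive through this example refutes the theorem itself. Let $u=0$ and $v=3d+2e+2f-g$, so $\doubleab{v}_4^2=9-9=0$ and $u<v$. At $w^0=1$ an intermediate vertex would need $\doubleab{D-\underw}_3^2\le 4$ with $\doubleab{\underw}_3^2\le 1$, impossible by the computation above; at $w^0=2$ it would need $\doubleab{D-\underw}_3^2\le 1$, forcing $\underw=D$ or $\underw=D-s$, whose norms squared are $9$ and $\ge 5$, exceeding the bound $4$. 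So $0\prec v$ with $v^0-u^0=3$, contradicting (ii), and the unique path from $0$ to $v$ is the single link, of length $1\ne 3$, contradicting (iii).

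For comparison, the paper's proof of (ii) and (iii) silently assumes exactly your lemma: it asserts that the shell-by-shell construction, appending $(1,\underzero)$ or $(1,\undere_j)$ at each time step, exhausts $\cscript_0^+$. The same vertex shows this is false: $2e+2f-g$ has $\doubleab{\cdot}_3$-norm $3$ but is not $3$ times a lattice unit vector, so by the triangle inequality (with its equality case) it is not a sum of three steps from $\brac{\underzero,\undere_1,\ldots,\undere_{12}}$, and $(3,2e+2f-g)$ never appears in the constructed history even though it lies in $\sscript^3$. So your reduction is actually a service — it isolates the hidden assumption that the paper's argument glosses over — but it cannot be completed, and neither can the paper's. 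Only part (i) survives, and by a different route than yours: since the interval between $u$ and $v$ is finite, one repeatedly inserts intermediate vertices into $u<w<v$ until the chain saturates into a path, an order-theoretic argument (the one the paper gives for (i)) that needs no geometric reduction property at all. Your closing instinct to ``check the extremal diagonal displacements before regarding the argument as complete'' was exactly the right one; carrying it out shows parts (ii) and (iii) are false as stated rather than provable.
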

\begin{proof}
(i)\enspace If there exists a path from $u$ to $v$, then $u<v$ by transitivity. Conversely, suppose $u<v$. If $u\prec v$, the $u,v$ form a path and we are finished. If $u\not\prec v$, then there is a $w\in\cscript _0^+$ such that $u<w<v$. If $u\prec w\prec v$, then $u,w,v$ form a path. Otherwise, there is a $w_1\in\cscript _0^+$ such that $u<w_1<w<v$ or $u<w<w_1<v$. Since there are only a finite number of vertices between $u$ and $v$, this process must eventually end and we obtain a path from $u$ to $v$.
(ii)\enspace In Lemma~\ref{lem21} we constructed the 12 unit vectors in $\sscript _3$. It will be convenient to label them by
$\undere _1=e,\undere _2=f,\undere _3=g,\undere _4=e-f,\undere _5=e-g,\undere _6=f-g,\undere _7=-e,\undere _8=-f,\ldots ,\undere _{12}=g-f$
We can construct $\cscript _0^+$ as follows. The universe histories $\overs ^0$ and $\overs ^1$ are given by $\overs ^0=\brac{0}$,
\begin{equation*}
\overs ^1=\brac{0,(1,\underzero ),(1,\undere _1),(1,\undere _2),\ldots ,(1,\undere _{12})}
\end{equation*}
To obtain $\overs ^2$ we add the vectors $(1,\underzero),(1,\undere _1),\ldots ,(1,\undere _{12})$ to those in $\overs ^1$
\begin{align*}
\overs ^2&=\left\{0,(1,\underzero ),(1,\undere _1),
     \ldots ,(1,\undere _{12}),(2,\underzero),(2,\undere _1),(2,2\undere _1),(2,\undere _1+\undere _2),\right.\\
     &\quad\left.\dots ,(2,\undere _{11}+\undere _{12}\right\}
\end{align*}
Strictly speaking, there are repeats in this list for $\overs ^2$ which should be eliminated. Continue this process to obtain the universe histories
$\overs ^t$ for all $t\in\positive$. If $v^0=u^0+1$ and $\doubleab{v-u}_4=0$ or $1$, then $u<v$. Suppose there is a $w\in\cscript _0^+$ such that $u<w<v$. Then $u^0<w^0<v^0$ which is impossible so $u\prec v$. Conversely, suppose $u\prec v$. By our previous construction of
$\cscript _0^+$ we have that
\begin{equation*}
(v^0,\underv )=(u^0,\underu )+(1,\underzero )=(u^0+1,\underu )
\end{equation*}
or for some $j=1,2,\ldots ,12$
\begin{equation*}
(v^0,\underv )=(u^0,\underu )+(1,\undere _j)=(u^0+1,u+\undere _j)
\end{equation*}
In either case, $v^0=u^0+1$ and in the first case
\begin{equation*}
\doubleab{v-u}_4^2=1-0=1
\end{equation*}
while in the second case
\begin{equation*}
\doubleab{v-u}_4^2=1-\doubleab{\undere _j}_3^2=1-1=0
\end{equation*}
(iii)\enspace If $u<v$ then $u\in\sscript ^t$ and $v\in\sscript ^{t'}$ for $t<t'$. From (i) there is a path from $u$ to $v$ and by the construction in (ii), it is clear that every path from $u$ to $v$ has length $t'-t$.
\end{proof}

Applying Theorem~\ref{thm42}(iii), all paths from $0$ to $v$ have length $v^0$. We then say that the \textit{height} of $v$ is $v^0$. We conclude that $\sscript ^t$ is precisely the set of vertices with height $t$ and call $\sscript ^t$ the $t$-\textit{th shell} in $\overs ^{t'}$ for $t\le t'$. Also, notice from the construction in Theorem~\ref{thm42}(ii) that every $u\in\cscript _0^+$ has precisely 13 children. We do not know the number of parents a vertex has and it would be interesting to find out. This would be useful in finding the cardinalities of $\sscript ^t$ and $\overs ^t$ which are also unknown.

If a causet $P$ has the property that whenever $u,v\in P$ with $u<v$, then any two paths from $u$ to $v$ have the same length, we call $P$
\textit{weakly covariant}. For a general causet $P$ the \textit{height} $h(v)$ of $v\in P$ is the length of the longest path terminating at $v$.
We say that $P$ is \textit{covariant} if $h(u)<h(v)$ implies that $u<v$ \cite{gud13,gud151,gud152}. We have shown in previous works that a covariant causet has a natural weak metric and notions of curvature and geodesics. We have employed these concepts to develop a discrete quantum gravity \cite{gud13,gud151,gud152}.

\begin{lem}       
\label{lem43}
A covariant causet is weakly covariant.
\end{lem}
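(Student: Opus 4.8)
The plan is to show that in any causet the length of a path is governed by the height function $h$, and then to use covariance to pin that relationship down exactly. Concretely, I would prove that in a covariant causet the covering relation $u\prec v$ forces $h(v)=h(u)+1$. Once this is established, weak covariance is immediate: for any path $u=w_0\prec w_1\prec\cdots\prec w_n=v$ the telescoping identity $h(v)=h(u)+n$ gives $n=h(v)-h(u)$, a number depending only on the endpoints $u,v$ and not on the chosen path, so all paths from $u$ to $v$ share the same length.

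First I would record two elementary facts valid in \emph{any} causet, with no appeal to covariance. The first is that $a\prec b$ implies $h(b)\ge h(a)+1$: append the covering step $a\prec b$ to a longest path terminating at $a$ to obtain a path terminating at $b$ of length $h(a)+1$. The second is that along a longest path $v_0\prec v_1\prec\cdots\prec v_m=v$ terminating at $v$ (so $m=h(v)$) one has $h(v_i)=i$ for every $i$. The bound $h(v_i)\ge i$ is clear from the length-$i$ prefix, and $h(v_i)\le i$ holds because a strictly longer path terminating at $v_i$ could be concatenated with the tail $v_i\prec\cdots\prec v_m$ to yield a path terminating at $v$ of length exceeding $h(v)$, a contradiction. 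The upshot of this second fact is that every height value $0,1,\ldots,h(v)$ is actually realized by some vertex of $P$.

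The decisive step, and the only place covariance is used, is upgrading the inequality $h(v)\ge h(u)+1$ to the equality $h(v)=h(u)+1$ for a covering relation $u\prec v$. I would argue by contradiction: suppose $h(v)\ge h(u)+2$ and set $k=h(u)$. Since $k+1\le h(v)$, the second fact applied to $v$ furnishes a vertex $w$ with $h(w)=k+1$. Then $h(u)<h(w)$ and $h(w)<h(v)$, so covariance delivers $u<w$ and $w<v$, that is $u<w<v$; because $h(w)\notin\{h(u),h(v)\}$ the vertex $w$ is distinct from both endpoints and hence lies strictly between them. This contradicts $u\prec v$. Therefore $h(v)=h(u)+1$, and the telescoping argument of the first paragraph finishes the proof.

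I expect the main obstacle to be precisely this passage from the free inequality $h(v)\ge h(u)+1$ to the equality. The reverse inequality is the entire content of the lemma, and it is exactly what covariance is built to supply: covariance turns a strict gap in heights into an order relation, which is what allows me to manufacture an intermediate vertex and violate the covering hypothesis. The supporting claim that all intermediate heights are realized is the technical point that must be handled carefully, but it is a routine argument about longest paths that relies only on the finiteness of the causet.
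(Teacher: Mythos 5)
Your proof is correct, and its skeleton is the same as the paper's: both arguments reduce weak covariance to the fact that the height function increases by exactly one across each link, after which the telescoping identity $h(v)=h(u)+n$ forces every path from $u$ to $v$ to have length $h(v)-h(u)$. The difference lies in what actually gets proved. The paper's proof is a sketch: it takes a path from $0$ to $v$, asserts that ``it follows from covariance that $h(w_j)=j$,'' and then adds without further argument that any path between arbitrary $u<v$ has length $h(v)-h(u)$. The decisive point --- that in a covariant causet $u\prec v$ forces $h(v)=h(u)+1$ --- is precisely what you isolate and justify: your Fact~2 shows every height value $0,1,\ldots,h(v)$ is realized along a longest path terminating at $v$, so a gap $h(v)\ge h(u)+2$ would let covariance produce a vertex $w$ with $h(u)<h(w)<h(v)$, hence $u<w<v$ with $w$ distinct from both endpoints (distinct heights guarantee distinctness), contradicting the covering relation. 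This fills in exactly the step the paper leaves implicit. Your route is also marginally more general: by working directly with an arbitrary covering pair instead of anchoring paths at $0$, you need neither a minimal element nor reachability from $0$, and the case of arbitrary $u<v$ comes out as the main conclusion rather than as an unproved ``moreover.'' The only genuinely delicate points --- realizability of intermediate heights, and the distinctness of the manufactured vertex $w$ from $u$ and $v$ --- are both handled explicitly, with finiteness of the causet correctly flagged as what makes the height function well defined.
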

\begin{proof}
Let $P$ be a covariant causet and let $v\in P$ with  $v>0$. Let $w_1\prec w_2\prec\cdots\prec w_n$ be a path from $0$ to $v$. It follows from covariance that $h(w_j)=j$ so that $h(v)=h(w_n)=n$. We conclude that any path from $0$ to $v$ has length $n$. Moreover, if $u<v$, then any path from $u$ to $v$ has length $h(v)-h(u)$.
\end{proof}

Applying Theorem~\ref{thm42}(iii) we conclude that the causets $\overs ^t$, $t>2$, are weakly covariant and it is easy to check that they are not covariant so the converse of Lemma~\ref{lem43} does not hold. However, $\overs ^t$ for $t$ up to some limit $t_1$ (possibly about 300) describes an inflationary period in which the universe histories are essentially flat and gravity has not yet taken effect. After time $t_1$ most of the spacetime cells have been formed and the system goes into a multiverse period. During this period, there are myriads of possible universes which are expanding much more slowly. The universes develop curvatures which are the cause of gravity. In the multiverse period, the universe histories become covariant and this is employed to describe curvature \cite{gud151}. However, the curvature is local and the universes are essentially flat because most of the spacetime cells have already been formed.

We assume that if a particle moves from $u$ to $v$, $u,v\in\cscript _0^+$ with $u<v$, then it traverses a path
$w_1\prec w_2\prec\cdots\prec w_n$, from $u$ to $v$ where $w_1=u$, $w_n=v$. The length of the path is $n-1=v^0-u^0$. By
Theorem~\ref{thm42}(ii) the time interval between $w_j$ and $w_{j+1}$ is 1 unit and the space distance moved is either 0 or 1 unit. This tells us that the \textit{instantaneous speed} of a particle is either 0 or 1. Thus, a particle that is not motionless can move at only one speed. If we are indeed using Planck units, this speed is the speed of light $c$ in a vacuum. This fundamental principle is the reason that $c$ is the speed limit for physical signals. What about objects that we know move slower than $c$? They appear to be moving slower that $c$ because we are actually measuring average speeds in our observations. If a particle propagates from $u$ to $v$, we define its \textit{average speed} to be
$s_a=\doubleab{\underv -\underu}_3/(v^0-u^0)$. Of course, $0\le s_a\le 1$ so $c$ is still the speed limit for average speed. Notice that $s_a$ can have various values. For example, if $u=0$ and $v=3d+3e$, then $s_a=1$, while if $u=0$ and $v=3d+2e-f$ then $s_a=1/\sqrt{3}$.

It is interesting to find the possible average speeds. Of course, there are only a finite number of possibilities for $u,v\in\overs ^t$ with $t$ fixed. For simplicity, suppose a particle propagates from $0$ to $v$ so that $s_a=\doubleab{\underv}_3/v^0$. For $v^0=1$, the only possibilities are $s_a=0,1$. For $v^0=2$, the possibilities are $s_a=0,1/2,\sqrt{2}/2,\sqrt{3}/2,1$. For $v^0=3$, we have
\begin{equation*}
s_a=0,1/3,\sqrt{2}/3,\sqrt{3}/3,\ldots ,\sqrt{8}/\sqrt{3},1
\end{equation*}
But now the pattern ends and there are gaps. For $v^0=4$ we have
\begin{equation*}
s_a=0,\sqrt{3}/4,\sqrt{4}/4,\sqrt{5}/4,\ldots ,\sqrt{15}/4,1
\end{equation*}
For $v^0=5$ we obtain
\begin{equation*}
s_a=0,1/5,\sqrt{2}/5,2/5,\sqrt{6}/5,\sqrt{7}/5,3/5,\sqrt{12}/5,\sqrt{13}/5,\ldots ,\sqrt{24}/5,1
\end{equation*}
We do not know a general formula for possible $s_a$ corresponding to $v^0$.

A \textit{symmetry} on $\sscript _4$ is a linear bijection $T\colon\sscript _4\to\sscript _4$ that preserves the norm $\doubleab{\ctimes}^4$ and has unit determinant. A symmetry $T$ is a \textit{boost} if $Td\ne d$. Notice, if $Td=d$ then $T\sscript _3=\sscript_3$ because
\begin{equation*}
0=\elbows{d,e}=\elbows{Td,Te}=\elbows{d,Te}
\end{equation*}
so $Te\in\sscript _3$ and similarly, $Tf,Tg\in\sscript _3$. Symmetries that are not boosts have the form $1\oplus Z$, $Z\in\gscript (\sscript _3)$. The question that now presents itself is: Are there any boosts? The simplest examples in classical special relativity are the \textit{basic} boosts of the form \cite{sw64,vel94}
\begin{equation*}
T=\begin{bmatrix}\noalign{\smallskip}
1/\sqrt{1-v^2}&v/\sqrt{1-v^2}&0&0\\\noalign{\smallskip}
v/\sqrt{1-v^2}&1\sqrt{1-v^2}&0&0\\\noalign{\smallskip}
0&0&1&0\\\noalign{\smallskip}0&0&0&1\\\noalign{\smallskip}\end{bmatrix}
\end{equation*}
The transformation $T$ describes a coordinate system moving at the constant velocity $v$ along the $x^1$ axis. But in our situation, we only have two values for an instantaneous velocity, $v=0$ or $1$. In the first case, $T=I$ which is not a boost, while the second case is impossible. Even if other values of $v$ are allowed (average velocities), it is easy to check that $T$ does not leave $\sscript _4$ invariant.

\begin{thm}       
\label{thm44}
There are no boosts.
\end{thm}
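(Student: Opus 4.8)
The plan is to avoid any Diophantine analysis of the matrix of a symmetry and instead exploit the causal structure already established in Theorem~\ref{thm42}. The key point is that a symmetry respecting the direction of time is automatically an order automorphism of the causet $\cscript _0^+$, and such a map must permute the children of $0$, whose only timelike member is $d$; preservation of $\doubleab{\ctimes}_4^2$ then pins $Td=d$.

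First I would record the few Minkowski facts needed. Since $\doubleab{d}_4^2=1$ and $\elbows{u,d}_4=u^0$ for every $u\in\sscript _4$, the image $Td$ is a unit timelike vector, so $\doubleab{Td}_4^2=1$ and in particular $(Td)^0=\elbows{Td,d}_4\ne 0$. Call $T$ \emph{orthochronous} if $(Td)^0>0$. The reverse Cauchy--Schwarz inequality in signature $(1,3)$ shows that for nonzero causal vectors $x,y$ (those with $\doubleab{x}_4^2,\doubleab{y}_4^2\ge 0$) one has $\elbows{x,y}_4>0$ exactly when $x$ and $y$ have the same time orientation and $\elbows{x,y}_4<0$ when the orientations are opposite. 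Combining this with $\elbows{Tu,Td}_4=\elbows{u,d}_4=u^0$ yields the crucial consequence: if $T$ is orthochronous and $u$ is forward causal, then $\elbows{Tu,Td}_4=u^0>0$, so $Tu$ shares the orientation of the forward vector $Td$, i.e.\ $Tu$ is forward causal. The same holds for $T^{-1}$, since $(T^{-1}d)^0=\elbows{T^{-1}d,d}_4=\elbows{d,Td}_4=(Td)^0>0$.

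Next I would translate this into order theory. By definition $u<v$ means precisely that $v-u$ is a nonzero forward causal vector, so the previous paragraph gives $u<v\iff Tu<Tv$ for orthochronous $T$ (the forward implication from $T$, the reverse from $T^{-1}$). Thus $T$ is an automorphism of the partial order $<$ on $\cscript _0^+$, and it fixes $0=T0$. Consequently $T$ permutes the covers $\brac{v:0\prec v}$ of $0$. By Theorem~\ref{thm42}(ii) this set is exactly $\sscript ^1=\brac{d,\,d+\undere _1,\dots ,d+\undere _{12}}$, and among its $13$ elements only $d$ is timelike ($\doubleab{d}_4^2=1$), the other twelve being null. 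Since $T$ preserves $\doubleab{\ctimes}_4^2$ and $Td\in\sscript ^1$, it must fix the unique timelike child, giving $Td=d$; hence $T=1\oplus Z$ with $Z\in\gscript (\sscript _3)$ and $T$ is not a boost.

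Finally I would address the orientation issue, which is the only delicate point. A genuine velocity boost is orthochronous — it carries the forward timelike axis $d$ to a forward timelike vector — so the argument above already excludes it. The bare hypothesis $Td\ne d$ does, however, also admit the total inversion $-I$, which lies in the symmetry group (it preserves $\doubleab{\ctimes}_4$, has determinant $1$, and is a lattice bijection) yet sends $d\mapsto -d$; this map reverses the direction of time and so is not a symmetry of the physical region $\sscript _4^+$. Restricting, as throughout the paper, to symmetries that preserve $\sscript _4^+$ — equivalently to orthochronous ones — the conclusion is exactly $Td=d$. I expect the main obstacle to be precisely this orthochronicity bookkeeping: one must verify that preserving $\doubleab{\ctimes}_4^2$ together with the single sign condition $(Td)^0>0$ forces preservation of the entire forward cone, after which the combinatorial identification of the children of $0$ in Theorem~\ref{thm42} finishes the proof with no further computation.
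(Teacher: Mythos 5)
Your proof is correct and takes a genuinely different route from the paper's. The paper argues by Diophantine enumeration: writing $Td$ and $Te$ in the basis $\brac{d,e,f,g}$, it converts preservation of $\doubleab{\ctimes}_4^2$ into the integer equations \eqref{eq42} and \eqref{eq45}, lists their nontrivial solutions, imposes the orthogonality condition \eqref{eq48}, and derives $Te=Tf$, contradicting bijectivity. You instead extract $Td=d$ from the causal combinatorics already proved in Theorem~\ref{thm42}: an orthochronous symmetry preserves forward causal vectors (together with $T^{-1}$), hence is an order automorphism of $\cscript _0^+$ fixing $0$, hence permutes the $13$ children of $0$, of which $d$ is the unique timelike one; norm preservation then forces $Td=d$ and $T=1\oplus Z$ with $Z\in\gscript (\sscript _3)$. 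What your route buys is conceptual economy and robustness --- no case analysis of lattice solutions, and the argument survives any modification of the spatial lattice in which $d$ remains the unique timelike vertex of $\sscript ^1$ --- at the price of the orthochronicity hypothesis. One small repair: your reverse Cauchy--Schwarz dichotomy is overstated for a pair of null vectors (two proportional forward null vectors have vanishing inner product), but every instance you actually use pairs a causal vector with the timelike vector $d$ or $Td$, where the strict sign dichotomy does hold, so you should state it in that form.

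Your orthochronicity caveat is not mere bookkeeping; it flags a genuine defect in the paper's own statement and proof. Under the paper's literal definitions, $-I$ --- and more generally $(-1)\oplus W$ with $W$ any determinant $-1$ isometry of $\sscript _3$ --- is a linear bijection preserving $\doubleab{\ctimes}_4^2$ with determinant $1$ and $Td=-d\ne d$, hence a ``boost,'' so Theorem~\ref{thm44} is false as stated. The paper's proof silently discards the solution $t_0=-1$, $t_1=t_2=t_3=0$ of \eqref{eq42}, even though it gives $Td\ne d$, while only the $t_0=+1$ solution is dismissed as trivial. Your restriction to symmetries preserving $\sscript _4^+$ supplies exactly the missing hypothesis, though note that only one direction of your ``equivalently'' is immediate: preserving $\sscript _4^+$ forces $(Td)^0>0$, which is all your argument needs; the converse only becomes clear after the theorem identifies $T$ as $1\oplus Z$.
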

\begin{proof}
Suppose $T\colon\sscript _4\to\sscript _4$ is a boost and
\begin{align*}
Td&=t_0d+t_1e+t_2f+t_3g\\
Te&=s_0d+s_1e+s_2f+s_3g
\end{align*}
Then
\begin{equation*}
1=\doubleab{d}_4^2=\doubleab{Td}_4^2=t_0^2-\tfrac{1}{2}\sqbrac{(t_1+t_2)^2+(t_1+t_3)^2+(t_2+t_3)^2}
\end{equation*}
Hence,
\begin{equation}         
\label{eq42}
(t_1+t_2)^2+(t_1+t_3)^2+(t_2+t_3)^2+2=2t_0^2
\end{equation}
The only nontrivial solutions to \eqref{eq42} are
\begin{align}         
\label{eq43}
t_0&=2,\ t_1=t_2=1,\ t_3=0\\
\label{eq44}
t_0&=3,\ t_1=t_2=2,\ t_3=-2
\end{align}
and similar ones. Notice that trivial solutions like $t_0=1$, $t_1=t_2=t_3=0$ do not give boosts because then $Td=d$. We also have
\begin{equation*}
-1=\doubleab{e}_4^2=\doubleab{Te}_4^2=s_0^2-\tfrac{1}{2}\sqbrac{(s_1+s_2)^2+(s_1+s_3)^2+(s_2+s_3)^2}
\end{equation*}
Hence,
\begin{equation}         
\label{eq45}
(s_1+s_2)^2+(s_1+s_3)^2+(s_2+s_3)^2=2s_0^2+2
\end{equation}
The only nontrivial solutions of \eqref{eq45} are
\begin{align}         
\label{eq46}
s_0&=1,\ s_1=s_2=1,\ s_3=-1\\
\label{eq47}
s_0&=2,\ s_1=2,\ s_2=1,\ s_3=-1
\end{align}
and similar ones. We also must satisfy
\begin{equation}         
\label{eq48}
0=\elbows{d,e}=\elbows{Td,Te}=t_0s_0-\tfrac{1}{2}(t_1,s_1+t_2s_2+t_3s_3)
\end{equation}
The only combination of cases \eqref{eq43}, \eqref{eq44} with cases \eqref{eq46}, \eqref{eq47} that satisfy \eqref{eq48} are \eqref{eq44} and
\eqref{eq46}. Thus,
\begin{equation*}
Te=d+e+f-g
\end{equation*}
But the same reasoning shows that $Tf=d+e+f-g$. Hence, $Te=Tf$ which contradicts the bijectivity of $T$.
\end{proof}

We conclude from Theorem~\ref{thm44} that all symmetries of $\sscript _4$ have the form $1\oplus Z$, $Z\in\gscript (\sscript _3)$. We denote this group by $\gscript (\sscript _4)$.

\section{Discrete Quantum Field Theory} 
As discussed in the previous section, $\sscript _4$ is the discrete spacetime background for expanding universe histories during an inflationary period. Most of the spacetime cells are formed and the universe is essentially flat during this period. Curvatures and gravity do not emerge until later. We now discuss discrete quantum field theory. As usual, this theory only employs special relativity and the general relativity of gravity is neglected. This assumption is well-founded because gravitation is extremely weak compared to electromagnetic and nuclear forces. For this reason we shall take $\sscript _4$ as the underlying spacetime for our quantum field theory.

Let $\sscripthat _4$ be a copy of $\sscript _4$ whose elements are labeled by
\begin{equation*}
p=(p^0,\underp )=(p^0,p^1,p^2,p^3)
\end{equation*}
in Cartesian coordinates. We think of $\sscripthat _4$ as being dual to $\sscript _4$ with indefinite inner product
\begin{equation*}
p\ctimes x=p^0x^0-p^1x^1-p^2x^2-p^3x^3
\end{equation*}
where $x\in\sscript _4$ is given in Cartesian coordinates. We call $p^0$ the \textit{total energy}, $\underp$ the \textit{momentum} and
\begin{equation*}
m^2=\doubleab{p}_4^2=(p^0)^2-||\underp ||_3^2=(p^0)^2-(p^1)^2-(p^2)^2-(p^3)^2
\end{equation*}
the \textit{mass squared}. We only consider $m^2$ when $m^2\ge 0$. We see that $m^2$ can only have integer values so mass is discrete in this theory. We have not found a formula for the possible values of $m^2$ as a function of $p^0$ and this would be of interest to know. We have computed the values of $||\underp ||_3^2$ up to 49 which are the following:
\begin{equation*}
||\underp ||_3^2=0,1,2,\ldots ,13,16,18,19,21,23,\ldots ,28,31,33,35,\ldots ,39,43,49
\end{equation*}
We then obtain the following mass squared values as a function of $p^0$
\bigskip

\noindent
\begin{tabular}{c|c}
$p^0$&$m^2$\hfill\hfill\\
\hline
0&0\hfill\hfill\\ 
1&0,\ 1\hfill\hfill\\
2&0,\ 1,\ 2,\ 3,\ 4\hfill\hfill\\
3&0,\ 1,\ 2,\ $\ldots$,\ 9\hfill\hfill\\
4&0,\ 3,\ 4,\ $\ldots$,\ 16\hfill\hfill\\
5&0,\ 1,\ 2,\ 4,\ 6,\ 7,\ 9,\ 12,\ 13,\ $\ldots$,\ 25\hfill\hfill\\
6&0,\ 1,\ 3,\ 5,\ 8,\ 9,\ $\ldots$,\ 13,\ 15,\ 17,\ 18,\ 20,\ 23,\ 24,\ $\ldots$,\ 36\hfill\hfill\\
7&0,\ 6,\ 10,\ $\ldots$,\ 14,\ 16,\ 18,\ 21,\ $\ldots$,\ 26,\ 28,\ 30,\ 31,\ 33,\ 36,\ $\ldots$,\ 49\\
\end{tabular}
\bigskip

We can identify $\gscript (\sscript _4)$ with $\gscript (\sscripthat _4)$ and when we write $Z\in\gscript (\sscript _4)$ we mean
$Z\in\gscript (\sscript _4)$ or $Z\in\gscript (\sscripthat _4)$ whichever is applicable. The set $\sscript _4\times\gscript (\sscript _4)$ becomes a group with product
\begin{equation*}
(y,Y)(z,Z)=(y+Yz,YZ)
\end{equation*}
We think of $\sscript _4\times\gscript (\sscript _4)$ as a discrete Poincare group. For $m^2\ge 0$, the \textit{mass hyperboloid} is the set
\begin{equation*}
\Gamma _m=\brac{p\in\sscripthat _4\colon\doubleab{p}_4^2=m^2}
\end{equation*}
The basic Hilbert space for this theory is
\begin{equation*}
H_m=L_2(\Gamma _m)=\brac{f\colon\Gamma _m\to\complex\colon\sum _{p\in\Gamma _m}\ab{f(p)}^2<\infty}
\end{equation*}
with the usual inner product
\begin{equation*}
\elbows{f,g}=\sum _{p\in\Gamma _m}\overline{f(p)}g(p)
\end{equation*}

Define the representation $V$ of $\sscript _4\times\gscript (\sscript _4)$ on $H_m$ by
\begin{equation*}
\sqbrac{V(y,Y)f}(p)=e^{ip\cdot y}f(Y^{-1}p)
\end{equation*}
To show that $V$ gives a representation, we have
\begin{align*}
\sqbrac{V(y,Y)V(z,Z)f}(p)&=e^{ip\cdot y}\sqbrac{V(z,Z)f}(Y^{-1}p)\\
  &=e^{ip\cdot y}e^{iY^{-1}p\cdot z}f(Z^{-1}Y^{-1}p)\\
  &=e^{ip\cdot (y+Yz)}f\paren{(YZ)^{-1}p}\\
  &=\sqbrac{V(y+Yz,YZ)f}(p)\\
  &=\sqbrac{V\paren{(y,Y)(z,Z)}f}(p)
\end{align*}
where the third equality follows from
\begin{equation*}
Y^{-1}p\cdot z=YY^{-1}p\cdot Yz=p\cdot Yz
\end{equation*}

Notice that $Z\in\gscript (\sscript _4)$ leaves $\Gamma _m$ invariant because
\begin{equation*}
||Zp||_4^2=\doubleab{p}_4^2=m^2
\end{equation*}
It follows that $V$ is a unitary representation. Indeed,
\begin{align*}
\elbows{V(z,Z)f,V(z,Z)g}&=\sum _{p\in\Gamma _m}\overline{\sqbrac{V(z,Z)f}(p)}\sqbrac{V(z,Z)g}(p)\\
  &=\sum _{p\in\Gamma _m}e^{-ip\cdot z}\overline{f(Z^{-1}p)}e^{ip\cdot z}g(Z^{-1}p)\\
  &=\sum _{p\in\Gamma _m}\overline{f}(p)g(p)=\elbows{f,g}
\end{align*}

We have the four self-adjoint operators $P^j$, $j=0,1,2,3$, on $H_m$ given by $(P^jf)(p)=p^jf(p)$. Since
\begin{align*}
(P^0)^2f(p)&=(p^0)^2f(p)=\sqbrac{(p^1)^2+(p^2)^2+(p^3)^2+m^2}f(p)\\
  &=\sqbrac{(P^1)^2+(P^2)^2+(P^3)^2+m^2}f(p)
\end{align*}
we conclude that
\begin{equation*}
(P^0)^2=(P^1)^2+(P^2)^2+(P^3)^2+m^2I
\end{equation*}
The eigenvectors of $P^2$ are the characteristic functions $\chi _p$, $p\in\Gamma _m$ with eigenvalues $p^j$, $j=0,1,2,3$, on the mass hyperboloid.

The space $H_m$ that we have considered until now is the \textit{scalar} (or \textit{spin}-0) \textit{mass} $m$ Hilbert space. The \textit{vector}
(or \textit{spin}-1) \textit{mass} $m$ Hilbert space is $H_m\otimes\complex ^3$ with the usual inner product. The representation $V_1$ of
$\sscript _4\times\gscript (\sscript _4)$ on $H_m\otimes\complex ^3$ is given by
\begin{equation*}
V_1(y,Y)f(p)\otimes v=e^{ip\cdot y}f(Y^{-1}p)\otimes U(Y^{-1})v
\end{equation*}
As before, $V_1$ is a unitary representation. The \textit{spin}-$1/2$ \textit{mass} $m$ Hilbert space is $H_m\otimes\complex ^2$ with the usual inner product. The unitary representation $V_{1/2}$ of $\sscript _4\times\gscript (\sscript _4)$ on $H_m\otimes\complex ^2$ is given by
\begin{equation*}
V_{1/2}(y,Y)f(p)\otimes v=e^{ip\cdot y}f(Y^{-1}p)\otimes\rscript (Y^{-1})v
\end{equation*}
We can continue to form the \textit{spin}-$n/2$ \textit{mass} $m$ Hilbert space
\begin{equation*}
H_m\otimes\complex ^2\otimes\cdots\otimes\complex ^2
\end{equation*}
where there are $n$ factors of $\complex ^2$ with the unitary representation
\begin{equation*}
V_{n/2}(y,Y)f(p)\otimes v_1\otimes\cdots\otimes v_n
  =e^{ip\cdot y}f(Y^{-1}p)\otimes\rscript (Y^{-1})v_1\otimes\cdots\otimes\rscript (Y^{-1})v_n
\end{equation*}

We finally develop a discrete quantum field theory. For simplicity, we shall only consider scalar fields and the extension to nonzero spins is fairly straightforward. We first form the Hilbert space $\hscript _m=\oplus_{n=0}^\infty H_m^n$ where $H_m^0=\complex$ and for $n\in\positive$,
$H_m^n$ is the symmetric tensor product
\begin{equation*}
H_m^n=H_m\circleds H_m\circleds\cdots\circleds H_m\quad (n\hbox{ factors})
\end{equation*}
We can consider $f\in H_m^n$ as a symmetric function of $n$ variables $f(p_1,\ldots ,p_n)$, $p_j\in\Gamma _m$, $j=1,2,\ldots ,n$. For
$x\in\sscript _4$ we construct the \textit{field operators} $\phi (x)\colon\hscript _m\to\hscript _m$ by defining $\phi (x)\colon H_m^{n+1}\to H_m^n$ as follows. For $f\in\hscript _m$ we have
\begin{equation}        
\label{eq51}
f=f^{(0)}\oplus f^{(1)}\oplus f^{(2)}\oplus\cdots
\end{equation}
where $f^j\in H_m^j$, $j=0,1,2,\ldots\,$. We define $\phi (x)f^{(0)}=0$ and for $n\in\positive$ we define \cite{sw64,vel94}
\begin{equation*}
\sqbrac{\phi (x)f^{(n+1)}}(p_1,p_2,\ldots ,p_n)=\sqrt{n+1}\sum _{p\in\Gamma _m}e^{-ip\cdot x}f^{(n+1)}(p,p_1,\ldots ,p_n)
\end{equation*}
We also construct the field operators $\psi (x)\colon\hscript _m\to\hscript _m$ by defining $\psi (x)\colon H_m^n\to H_m^{n+1}$ as follows
\begin{equation*}
\sqbrac{\psi (x)f^{(n)}}(p_1,p_2,\ldots ,p_{n+1})=\frac{1}{\sqrt{n+1}}\sum _{j=1}^{n+1}e^{ip_j\cdot x}f^{(n)}(p_1,\ldots ,\phat _j,\ldots ,p_{n+1})
\end{equation*}
where $\phat _j$ means omit $p_j$. We interpret $\phi (x)$ as an operator that annihilates a mass $m$ particle at the spacetime point $x$ while
$\psi (x)$ is an operator that creates a mass $m$ particle at the spacetime point $x$. We obtain a unitary representation of
$\sscript _4\times\gscript (\sscript _4)$ on $\hscript _m$ by defining
\begin{equation*}
\sqbrac{V(y,Y)f^{(n)}}(p_1,\ldots ,p_n)=e^{i\sum _{j=1}^np_j\cdot y}f^{(n)}(Y^{-1}p_1,\ldots ,Y^{-1}p_n)
\end{equation*}
on $H_m^n$ and extending $V(y,Y)$ to $\hscript _m$ by applying \eqref{eq51}.

\begin{thm}       
\label{thm51}
For every $x\in\sscript _4$, $\psi (x)=\phi (x)^*$.
\end{thm}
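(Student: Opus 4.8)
The plan is to establish the defining adjoint identity $\elbows{\psi(x)f,g}=\elbows{f,\phi(x)g}$ for all $f,g$ ranging over the dense domain of finite-particle vectors in $\hscript_m$ (those decompositions \eqref{eq51} with only finitely many nonzero components). Since $\psi(x)$ maps $H_m^n$ into $H_m^{n+1}$ while $\phi(x)$ maps $H_m^{n+1}$ into $H_m^n$, and since the summands in $\hscript_m=\oplus_{n=0}^\infty H_m^n$ are mutually orthogonal, all cross-level terms vanish and it is enough to check, for each fixed $n$, the single identity
\begin{equation*}
\elbows{\psi(x)f^{(n)},g^{(n+1)}}=\elbows{f^{(n)},\phi(x)g^{(n+1)}}
\end{equation*}
with $f^{(n)}\in H_m^n$ and $g^{(n+1)}\in H_m^{n+1}$.

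First I would expand both sides as sums over tuples in $\Gamma_m$ using the inner product $\elbows{\ctimes,\ctimes}$ on each symmetric tensor power. The right-hand side is immediate from the definition of $\phi(x)$ and equals
\begin{equation*}
\sqrt{n+1}\sum_{p,p_1,\ldots,p_n\in\Gamma_m}e^{-ip\ctimes x}\,\overline{f^{(n)}(p_1,\ldots,p_n)}\,g^{(n+1)}(p,p_1,\ldots,p_n).
\end{equation*}
For the left-hand side I would substitute the definition of $\psi(x)$, obtaining a sum of $n+1$ terms; here the crucial elementary observation is that $p_j\ctimes x$ is real, so conjugating the phase $e^{ip_j\ctimes x}$ appearing in $\psi(x)$ produces $e^{-ip_j\ctimes x}$, exactly matching the phase in $\phi(x)$.

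The key step, and the one carrying the whole proof, is to collapse the $(n+1)$-fold sum coming from $\psi(x)$ into the single sum above. Because $g^{(n+1)}$ lies in the symmetric tensor product $H_m\circleds\cdots\circleds H_m$, it is invariant under permutations of its arguments; in the $j$-th summand I would relabel the dummy variable $p_j$ as $p$ and the remaining $n$ variables as $p_1,\ldots,p_n$, and invoke the symmetry of $g^{(n+1)}$ to move $p$ into the first slot. Every one of the $n+1$ summands then yields the identical expression, so the prefactor $1/\sqrt{n+1}$ multiplied by $n+1$ gives exactly $\sqrt{n+1}$, reproducing the right-hand side.

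The main obstacle is bookkeeping rather than conceptual: one must keep the normalization constants straight through the symmetrization (the interplay between the $\sqrt{n+1}$ in $\phi(x)$, the $1/\sqrt{n+1}$ in $\psi(x)$, and the $n+1$ terms), and must be careful that the relabeling argument genuinely uses the symmetry of $g^{(n+1)}$ and not of $f^{(n)}$. I would also remark that, because these operators are unbounded (the $\sqrt{n+1}$ factors preclude boundedness), the equality $\psi(x)=\phi(x)^*$ is to be read on the finite-particle domain, where every sum over $\Gamma_m$ is absolutely convergent by the $L_2$ condition defining $H_m$, so all rearrangements are justified.
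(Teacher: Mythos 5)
Your proof is correct and is essentially the paper's own argument: the paper verifies the conjugate identity $\elbows{\phi(x)f^{(n+1)},g^{(n)}}=\elbows{f^{(n+1)},\psi(x)g^{(n)}}$ level by level on the Fock grading, using the symmetry of the $(n+1)$-variable function to split the single sum over $\Gamma_m$ into $n+1$ equal relabeled pieces---exactly your collapsing step read in reverse, with the same interplay of the $\sqrt{n+1}$, the $1/\sqrt{n+1}$, and the $n+1$ summands. Your added remarks on the finite-particle domain and absolute convergence are sound refinements that the paper leaves implicit, but the core computation is the same.
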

\begin{proof}
Let $f,g\in\hscript _m$. Since $f$ and $g$ are symmetric in their variables and letting $p=p_{n+1}$ in the fourth equality, we obtain
\begin{align*}
\elbows{\phi (x)f^{(n+1)},g^{(n)}}&=\elbows{\sqrt{n+1}\sum _{p\in\Gamma _m}e^{-ip\cdot x}f^{(n+1)}(p,p_1,\ldots ,p_n),g^{(n)}(p_1,\ldots ,p_n)}\\
  &=\sqrt{n+1}\sum _{p_1,\ldots ,p_n}\sum _pe^{ip\cdot x}\overline{f^{(n+1)}(p,p_1,\ldots ,p_n)}g^{(n)}(p_1,\ldots ,p_n)\\
  &=\frac{1}{\sqrt{n+1}}\left\{\left[\sum _{p,p_1,\ldots ,p_n}e^{ip\cdot x}\overline{f^{(n+1)}(p,p_1,\ldots ,p_n)}\right.\right.\\
  &\quad +\sum _{p,p_1,\ldots ,p_n}e^{ip_1\cdot x}\overline{f^{(n+1)}(p,p_1,\ldots ,p_n)}+\cdots\\
  &\quad\left.\left. +\sum _{p,p_1,\ldots ,p_n}e^{ip_n\cdot x}\overline{f^{(n+1)}(p,p_1,\ldots ,p_n)}\right]g^{(n)}(p_1,\ldots ,p_n)\right\}\\
    &=\frac{1}{\sqrt{n+1}}\sum _{p_1,\ldots ,p_n}\overline{f^{(n+1)}(p_1,\ldots ,p_{n+1})}\\
    &\quad\quad\sum _{j=1}^{n+1}e^{ip_j\cdot x}g^{(n)}(p_1,\ldots ,\phat _j,\ldots ,p_{n+1})\\
    &=\frac{1}{\sqrt{n+1}}\elbows{f^{(n+1)}(p_1,\ldots ,p_{n_1}),
    \sum _{j=1}^{n+1}e^{ip_j\cdot x}g^{(n)}(p_1,\ldots ,\phat _j,\ldots ,p_{n+1})}\\
    &=\elbows{f^{(n+1)},\psi (x)g^{(n)}}
\end{align*}
We conclude that
\begin{align*}
\elbows{\phi (x)f,g}&=\sum _{n=0}^\infty\elbows{\phi(x)f^{(n+1)},g^{(n)}}
  =\sum _{n=0}^\infty\elbows{f^{(n+1)},\psi (x)g^{(n)}}\\
  &=\elbows{f,\psi (x)g}
\end{align*}
Hence, $\psi (x)=\phi (x)^*$.
\end{proof}

It is not hard to show that the commutator $\sqbrac{\phi (x),\phi (y)}=0$ for all $x,y\in\sscript _4$. Similarly, $\sqbrac{\psi (x),\psi (y)}=0$ for all
$x,y\in\sscript _4$.

\begin{thm}       
\label{thm52}
For all $x,y\in\sscript _4$ we have that
\begin{equation*}
\sqbrac{\phi (x),\psi (y)}=\sum _{p\in\Gamma _m}e^{ip\cdot (y-x)}I
\end{equation*}
\end{thm}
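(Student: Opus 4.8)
The plan is to verify the identity on each homogeneous component $H_m^n$ and then extend to $\hscript_m$ by linearity via the decomposition \eqref{eq51}. The first observation is that both orderings $\phi(x)\psi(y)$ and $\psi(y)\phi(x)$ carry $H_m^n$ back to $H_m^n$, since $\psi(y)$ raises and $\phi(x)$ lowers the particle number by one; hence their difference is a well-defined operator on each $H_m^n$. A useful preliminary is that the normalizations telescope: in $\phi(x)\psi(y)$ the factor $\sqrt{n+1}$ from $\phi(x)$ cancels the $1/\sqrt{n+1}$ from $\psi(y)$, while in $\psi(y)\phi(x)$ the $1/\sqrt{n}$ from $\psi(y)$ cancels the $\sqrt{n}$ from $\phi(x)$, so both compositions come out with coefficient $1$.

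First I would evaluate $\sqbrac{\phi(x)\psi(y)f^{(n)}}(p_1,\ldots,p_n)$. Applying $\psi(y)$ produces a symmetric function of $n+1$ variables, into whose leading slot $\phi(x)$ then contracts a summed momentum $p$. The defining sum for $\psi(y)$ splits naturally into the term in which the omitted variable is exactly the contracted momentum $p$, and the remaining $n$ terms in which one of $p_1,\ldots,p_n$ is omitted while $p$ survives as an argument of $f^{(n)}$. The first piece produces precisely
\begin{equation*}
\paren{\sum_{p\in\Gamma_m}e^{-ip\cdot x}e^{ip\cdot y}}f^{(n)}(p_1,\ldots,p_n)
=\paren{\sum_{p\in\Gamma_m}e^{ip\cdot(y-x)}}f^{(n)}(p_1,\ldots,p_n),
\end{equation*}
the desired right-hand side, while the remaining $n$ terms assemble into the leftover sum
\begin{equation*}
\sum_{k=1}^{n}e^{ip_k\cdot y}\sum_{p\in\Gamma_m}e^{-ip\cdot x}\,f^{(n)}(p,p_1,\ldots,\phat_k,\ldots,p_n).
\end{equation*}

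Next I would compute $\psi(y)\phi(x)f^{(n)}$ directly: here $\phi(x)$ contracts a fresh momentum $p$ into the leading slot first, and the subsequent $\psi(y)$ reinserts the factors $e^{ip_j\cdot y}$ while omitting $p_j$. The outcome is term-for-term identical to the leftover sum displayed above, with the omission index $j$ playing the role of $k$, so upon subtraction the two off-diagonal pieces cancel and only the diagonal term survives on every $H_m^n$ (the base case $n=0$ is consistent, since $\phi(x)$ annihilates $f^{(0)}$ and the off-diagonal sum is empty). Extending over the direct sum then yields $\sqbrac{\phi(x),\psi(y)}=\sum_{p\in\Gamma_m}e^{ip\cdot(y-x)}I$. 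The hard part will be the index bookkeeping in the matching step: one must keep the contraction variable $p$ in the first argument throughout, use the symmetry of $f^{(n)}$ to keep its arguments in a consistent order, and confirm that the ranges of the omission indices in the two compositions coincide exactly.
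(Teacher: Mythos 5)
Your proposal is correct and follows essentially the same route as the paper: compute $\phi(x)\psi(y)f^{(n)}$ and $\psi(y)\phi(x)f^{(n)}$ componentwise on $H_m^n$, observe that the $\sqrt{n+1}$ and $1/\sqrt{n+1}$ (respectively $1/\sqrt{n}$ and $\sqrt{n}$) normalizations cancel, match the off-diagonal omission terms between the two compositions, and extend to $\hscript_m$ by the direct-sum decomposition \eqref{eq51}. If anything, your bookkeeping of the omitted variables $\phat_j$ is more careful than the paper's displayed formulas, which contain minor typographical slips in exactly those terms.
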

\begin{proof}
For $f^{(n)}\in H_m^n$ we have
\begin{align*}
\sqbrac{\phi (x)\psi (y)f^{(n)}}&(p_1,\ldots ,p_n)=\sqrt{n+1}\sum _{p\in\Gamma _m}e^{-ip\cdot x}\sqbrac{\psi (y)f^{(n)}}(p,p_1,\ldots ,p_n)\\
  &=\sum _{p\in\Gamma _m}e^{-ip\cdot x}\left[e^{ip\cdot y}f^{(n)}(p_1,\ldots ,p_n)+e^{ip_1\cdot y}f^{(n)}(p,p_1,\ldots ,p_n)\right.\\
  &\quad\left.+\cdots +e^{ip_n\cdot y}f^{(n)}(p,p_1,\ldots ,p_n)\right]
\end{align*}
On the other hand,
\begin{align*}
\sqbrac{\psi (y)\phi (x)f^{(n)}}&(p_1,\ldots ,p_n)
  =\frac{1}{\sqrt{n}}\sum _{j=1}^ne^{ip_j\cdot y}\sqbrac{\phi (x)f^{(n)}}(p_1,\ldots ,\phat _j,\ldots ,p_n)\\
 &=\sum _{j=1}^n\sum _{p\in\Gamma _m}e^{-ip\cdot x}e^{ip_j\cdot y}f^{(n)}(p,p_1,\ldots ,\phat _j,\ldots ,p_n)\\
 &=\sum _{p\in\Gamma _m}e^{-ip\cdot x}\left[e^{ip_1\cdot y}f^{(n)}(p,p_2,\ldots ,p_n)+e^{ip_2\cdot y}f^{(n)}(p,p_1,p_3,\ldots ,p_n)\right.\\
 &\quad\quad\left.+\cdots +e^{ip_n\cdot y}f^{(n)}(p_,p_1,p_2,\ldots ,p_{n+1})\right]
\end{align*}
Therefore,
\begin{equation*}
\sqbrac{\phi (x),\psi (y)}f^{(n)}=\sum _{p\in\Gamma _m}e^{ip\cdot (y-x)}f^{(n)}
\end{equation*}
It follows that
\begin{equation*}
\sqbrac{\phi (x),\psi (y)}f=\sum _{p\in\Gamma _m}e^{ip\cdot (y-x)}f
\end{equation*}
for all $f\in\hscript _m$
\end{proof}

We also define the self-adjoint field operators $\xi (x)=\phi (x)+\psi(x)$ for all $x\in\sscript _4$.
\begin{cor}       
\label{cor53}
For all $x,y\in\sscript _4$ we have that
\begin{equation*}
\sqbrac{\xi (x),\xi (y)}=2i\sum _{p\in\Gamma _m}\sin p\cdot (y-x)I
\end{equation*}
\end{cor}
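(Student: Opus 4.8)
The plan is to expand the commutator $\sqbrac{\xi(x),\xi(y)}$ by bilinearity and then invoke the three commutator identities already in hand: the vanishing relations $\sqbrac{\phi(x),\phi(y)}=0$ and $\sqbrac{\psi(x),\psi(y)}=0$ noted just before the corollary, together with the mixed commutator from Theorem~\ref{thm52}. Substituting $\xi(x)=\phi(x)+\psi(x)$, $\xi(y)=\phi(y)+\psi(y)$ and distributing the bracket gives
\begin{equation*}
\sqbrac{\xi(x),\xi(y)}=\sqbrac{\phi(x),\phi(y)}+\sqbrac{\phi(x),\psi(y)}+\sqbrac{\psi(x),\phi(y)}+\sqbrac{\psi(x),\psi(y)}.
\end{equation*}
The first and fourth terms vanish, leaving only the two mixed terms to evaluate.

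Next I would handle the two surviving terms with Theorem~\ref{thm52}. The term $\sqbrac{\phi(x),\psi(y)}$ is literally its content, $\sum_{p\in\Gamma_m}e^{ip\cdot(y-x)}I$. For the reversed term I would use antisymmetry, $\sqbrac{\psi(x),\phi(y)}=-\sqbrac{\phi(y),\psi(x)}$, and apply Theorem~\ref{thm52} with the roles of $x$ and $y$ swapped to obtain $-\sum_{p\in\Gamma_m}e^{ip\cdot(x-y)}I=-\sum_{p\in\Gamma_m}e^{-ip\cdot(y-x)}I$. Adding the two contributions and combining exponentials via $e^{i\theta}-e^{-i\theta}=2i\sin\theta$ with $\theta=p\cdot(y-x)$ yields the asserted formula $2i\sum_{p\in\Gamma_m}\sin p\cdot(y-x)\,I$.

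I expect no genuine obstacle here: the argument is a one-line algebraic consequence of the preceding results. The only place warranting care is the sign bookkeeping when applying Theorem~\ref{thm52} to the reversed commutator, ensuring that the argument appears as $x-y$ there and hence as $-p\cdot(y-x)$ after rewriting, so that the two exponential sums assemble into a sine rather than cancelling or doubling. As a consistency check, since $\psi(x)=\phi(x)^*$ by Theorem~\ref{thm51}, each $\xi(x)$ is self-adjoint, so $\sqbrac{\xi(x),\xi(y)}^*=-\sqbrac{\xi(x),\xi(y)}$; the right-hand side $2i\sum_p\sin p\cdot(y-x)\,I$ indeed satisfies this, which corroborates the computed sign.
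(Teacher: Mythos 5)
Your proposal is correct and follows essentially the same route as the paper: expand $\sqbrac{\xi(x),\xi(y)}$ by bilinearity, discard the vanishing commutators $\sqbrac{\phi(x),\phi(y)}$ and $\sqbrac{\psi(x),\psi(y)}$, apply Theorem~\ref{thm52} (with antisymmetry for the reversed term) to the two mixed commutators, and combine $e^{ip\cdot(y-x)}-e^{-ip\cdot(y-x)}=2i\sin p\cdot(y-x)$. Your added self-adjointness consistency check is a nice touch but the argument is otherwise identical to the paper's.
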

\begin{proof}
Since $\sqbrac{\phi (x),\phi (y)}=\sqbrac{\psi (x),\psi (y)}=0$ we have that
\begin{equation*}
\sqbrac{\xi (x),\xi (y)}x=\sqbrac{\phi (x),\psi (y)}+\sqbrac{\psi (x),\phi (y)}
\end{equation*}
Applying Theorem~\ref{thm52} gives
\begin{align*}
\sqbrac{\xi (x),\xi (y)}&=\sum _{p\in\Gamma _m}\sqbrac{e^{ip\cdot (y-x)}-e^{-ip\cdot (y-x)}}I\\
  &2i\sum _{p\in\Gamma _m}\sin p\cdot (y-x)I\qedhere
\end{align*}
\end{proof}

The quantum field theory that we have developed is essentially trivial because the fields are free with no interactions. We view this as just the beginning, and initiate the difficult task of investigating interacting fields in the next section. It is hoped that these initial steps may result in a mathematically rigorous quantum field theory without singularities.

\section{Interacting Quantum Fields} 
This section provides a simple example of interacting quantum fields. If $X(n)$, $n=0,1,2,\ldots$, are operators on a Hilbert space, we define the \textit{difference operator} $\nabla X$ by $\nabla X(n)=X(n+1)-X(n)$. We begin with a simple, but useful, lemma.
\begin{lem}       
\label{lem61}
If $X(n)$ and $A(n)$ are operators on the same Hilbert space, $n=0,1,2,\ldots$, satisfying $\nabla X(n)=A(n)X(n)$, then
\begin{align*}
X(n)&=\sqbrac{I+A(n-1)}\sqbrac{I+A(n-2)}\cdots\sqbrac{I+A(0)}X(0)\\
  &=\left[I+\sum _{j=0}^{n-1}A(j)+\sum\brac{A(j_1)A(j_2)\colon j_1,j_2=0,1,\ldots ,n-1,\ j_2<j_1}\right.\\
  &\quad +\sum\brac{A(j_1)A(j_2)A(j_3)\colon j_1,j_2,j_3=0,1,\ldots ,n-1,\ j_3<j_2<j_1}\\
  &\quad\biggl. +\cdots +A(n-1)A(n-2)\cdots A(1)A(0)\biggr]X(0)
\end{align*}
\end{lem}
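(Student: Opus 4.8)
The plan is to prove this by induction on $n$. The statement has two equalities: the first expresses $X(n)$ as an ordered product of factors $\sqbrac{I+A(j)}$, and the second expands that product into a sum over increasing-index subsets of the $A(j)$. I would treat the first equality as the substantive claim and the second as a routine combinatorial expansion of the product.

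For the first equality, the base case $n=0$ is trivial (the empty product is $I$, giving $X(0)=X(0)$) and $n=1$ follows immediately from the defining relation, since $\nabla X(0)=X(1)-X(0)=A(0)X(0)$ gives $X(1)=\sqbrac{I+A(0)}X(0)$. For the inductive step, I would assume the product formula holds at level $n$ and use the recursion in the form $X(n+1)=X(n)+\nabla X(n)=X(n)+A(n)X(n)=\sqbrac{I+A(n)}X(n)$, which comes directly from the hypothesis $\nabla X(n)=A(n)X(n)$. Substituting the inductive expression for $X(n)$ then prepends the factor $\sqbrac{I+A(n)}$ on the left, yielding exactly the claimed product at level $n+1$. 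The key point to emphasize is the ordering: because these are operators on a Hilbert space that need not commute, the factor $\sqbrac{I+A(n)}$ must sit on the far left, and the recursion naturally produces them in descending index order from left to right, matching the statement.

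For the second equality, I would expand the product $\sqbrac{I+A(n-1)}\cdots\sqbrac{I+A(0)}$ by selecting, from each factor, either $I$ or $A(j)$. Each term in the expansion corresponds to a choice of a subset $S\subseteq\brac{0,1,\ldots,n-1}$ of indices where $A(j)$ is chosen rather than $I$, and since the factors are written in descending order of index, the product of the chosen $A(j)$'s appears with indices decreasing left to right, i.e. $A(j_1)A(j_2)\cdots A(j_k)$ with $j_1>j_2>\cdots>j_k$. Collecting terms by the size $k=\ab{S}$ of the subset reproduces the displayed sum: the $I$ term (empty subset), the single sum $\sum_{j=0}^{n-1}A(j)$ ($k=1$), the double sum over $j_2<j_1$ ($k=2$), and so on up to the single top-order term $A(n-1)\cdots A(0)$ ($k=n$).

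I do not expect any genuine obstacle here; the lemma is a discrete-time analogue of the time-ordered product solving $\dot X=AX$, and everything reduces to bookkeeping. The only point requiring care is non-commutativity, so I would be explicit that the ordering of indices in each monomial is forced by the descending arrangement of the factors and that no reordering is permitted at any stage. This is precisely the detail that makes the constraint $j_k<\cdots<j_2<j_1$ in the summation indices correct rather than an unordered choice.
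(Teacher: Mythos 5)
Your proposal is correct and follows essentially the same route as the paper: iterate the recursion $X(n+1)=\sqbrac{I+A(n)}X(n)$ to obtain the descending ordered product, then expand that product into the sum over decreasing index tuples. The only cosmetic difference is that the paper dispatches the second equality with a one-line ``follows by induction,'' while you expand the product directly by choosing $I$ or $A(j)$ from each factor; your explicit attention to non-commutativity and the forced ordering $j_k<\cdots<j_1$ is a welcome elaboration of the same bookkeeping.
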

\begin{proof}
We have that $X(n+1)-X(n)=A(n)X(n)$ so $X(n+1)=\sqbrac{I+A(n)}X(n)$. Replacing $n+1$ by $n$ we obtain
\begin{align*}
X(n)&=\sqbrac{I+A(n-1)}X(n-1)=\sqbrac{I+A(n-1)}\sqbrac{I+A(n-2)}X(n-2)\\
   &=\cdots =\sqbrac{I+A(n-1)}\sqbrac{I+A(n-2)}\cdots\sqbrac{I+A(0)}X(0)
\end{align*}
The second equality follows by induction.
\end{proof}
For $x\in\sscript _4$ define the field operators $\pi (x)=\phi (x)+\phi (x)^*$ on the Hilbert space $\hscript _m$ as discussed in Section~5. Also, let $\sigma (x)$, $x\in\sscript _4$, be similarly defined field operators on the Hilbert space $\hscript _M$, $M\ge 0$. We think of $\pi (x)$ as describing $\pi$-particles of mass $m$ and $\sigma (x)$ as describing $\sigma$-particles of mass $M$. The interaction between these two types of particles will be described on the tensor product $\hscript =\hscript _m\otimes\hscript _M$. Central roles are played by two operators on $\hscript$. These operators are functions of time $x_0=0,1,2,\ldots$, and are the self-adjoint \textit{interaction Hamiltonian} $H(x_0)$ and the
\textit{scattering operator} $S(x_0)$. We assume that $S(0)=I$ and that $S(x_0)$ satisfies the discrete Schr\"odinger's equation:
\begin{equation*}
-i\nabla _{x_0}S(x_0)=H(x_0)S(x_0)
\end{equation*}
It follows from Lemma~\ref{lem61} that
\begin{align}         
\label{eq61}
S(n)&=I+i\sum _{j=0}^{n-1}H(j)+i^2\sum\brac{H(j_1)H(j_2)\colon j_1,j_2=0,1,2,\ldots ,n-1,\ j_2<j_1}\notag\\
  &\quad +i^3\sum\brac{H(j_1)H(j_2)H(j_3)\colon j_1,j_2,j_3=0,1,2,\ldots ,n-1,\ j_3<j_2<j_1}\notag\\
  &\quad +\cdots +i^nH(n-1)H(n-2)\cdots H(0)
\end{align}
It is interesting that \eqref{eq61} resembles the inclusion-exclusion principle which is useful in probability and combinatorics theory. In the usual continuum theory, \eqref{eq61} has the form of a very complicated integral involving time-ordered products. In \eqref{eq61} the time-ordering is simpler and results in fewer terms. Essentially the only experiments available in quantum field theory are scattering experiments and the main objective in quantum field theory is to calculate the \textit{final scattering operator} $S=\lim _{n\to\infty}S(n)$. Unfortunately, \eqref{eq61} cannot usually be solved in closed form to find $S$. The only thing we can do is use approximations or perturbative techniques. The interaction Hamiltonian is usually given in terms of the \textit{Hamiltonian density} $K(x)$ by
\begin{equation*}
H(x_0)=\sum\brac{K(x_0,x_1,x_2,x_3)\colon\doubleab{x}_4\ge 0}
\end{equation*}
In our particular example, suppose we consider the scattering of two $\pi$-particles interacting with a $\sigma$-particle. We take
$K(x)=g\pi ^2(x)\otimes\sigma (x)$ \cite{vel94}, where $g$ is the \textit{coupling constant}. When $g$ is small, not as many terms are needed for approximations. Assume that the two $\pi$-particles initially have energy-momenta $p$ and $q$ giving rise, after scattering, to two $\pi$-particles with final energy-momenta $p'$ and $q'$. Instead of writing the initial and final states as $\ket{p}\otimes\ket{q}$, $\ket{p'}\otimes\ket{q'}$ we use the simpler notation $\ket{pq}$ and $\ket{p'q'}$, respectively. The probability amplitude for the event of interest is $\bra{p'q'}S\ket{pq}$ and the probability becomes
\begin{equation*}
\ab{\bra{p'q'}S\ket{pq}}^2
\end{equation*}
Applying \eqref{eq61}, the first two terms of $\bra{p'q'}S\ket{pq}$ have the form
\begin{equation*}
\elbows{p'q'\mid pq}+i\sum _{\doubleab{x}_4\ge0}\bra{p'q'}K(x)\ket{pq}
\end{equation*}
Assuming that $\ket{pq}\ne\ket{p'q'}$, the two vectors are orthogonal so the first term is zero. The second term contains one $K(x)$ and hence only one $\sigma (x)$ field. This applied to $\ket{pq}$ gives 0 for the annihilation part or a state of the form $\ket{pq,r}$ for the creation part. Since
$\elbows{p'q'\mid pq,r}=0$ we again obtain 0. Similarly, any term with an odd number of $K(x)$ gives 0. The third term is nonzero and is treated in a similar way. However, there are quite a few possibilities and it appears that the best way to keep track is to employ Feynman diagrams \cite{vel94}. We shall leave the details to later works.


\begin{thebibliography}{99}
\bibitem{gud68}S.~Gudder, Elementary length topologies in physics, \textit{SIAM J.\ Appl.\ Math.} \textbf{16}, 1011--1019 (1968).
\bibitem{gud13}S.~Gudder, A covariant causal set approach to discrete quantum gravity, arXiv: gr-qc 1311.3912 (2013).
\bibitem{gud151}S.~Gudder, Curvature and quantum mechanics on covariant causal sets, arXiv: gr-qc 1507.04810v1 (2015).
\bibitem{gud152}S.~Gudder, An isometric dynamics for a causet set approach to discrete quantum gravity, \textit{Intern.~J.~Theor.~Phys.}
\textbf{54}, 4214--4228 (2015).
\bibitem{hei30}W.~Heisenberg, \textit{The Physical Principles of Quantum Mechanics}, University of Chicago Press, Chicago (1930).
\bibitem{rus54}B.~Russell, \textit{The Analysis of Matter}, Dover, New York (1954).
\bibitem{sw64}R.~Streater and A.~Wightmann, \textit{PCT, Spin and Statistics and all that}, Benjamin, New York (1964).
\bibitem{vel94}M.~Veltman, \textit{Diagrammatica}, Cambridge University Press, Cambridge (1994).
\end{thebibliography}
\end{document}